%% file: main.tex
\documentclass[a4paper,UKenglish,cleveref, autoref, thm-restate]{lipics-v2021}
\pdfoutput=1 %uncomment to ensure pdflatex processing (mandatatory e.g. to submit to arXiv)
\hideLIPIcs  %uncomment to remove references to LIPIcs series (logo, DOI, ...), e.g. when preparing a pre-final version to be uploaded to arXiv or another public repository

\input{head}

\title{Reverse-Robust Computation with Chemical Reaction Networks} %TODO Please add

\titlerunning{Reverse-Robust Computation with Chemical Reaction Networks} %TODO optional, please use if title is longer than one line

\author{Ravi Kini}{Computer Science, University of California, Davis}{kkini@ucdavis.edu}{https://orcid.org/0009-0009-7229-1795}{}
\author{David Doty}{Computer Science, University of California, Davis}{doty@ucdavis.edu}{https://orcid.org/0000-0002-3922-172X}{funded by NSF awards 2211793 and 2329909}%TODO mandatory, please use full name; only 1 author per \author macro; first two parameters are mandatory, other parameters can be empty. Please provide at least the name of the affiliation and the country. The full address is optional. Use additional curly braces to indicate the correct name splitting when the last name consists of multiple name parts.

\authorrunning{D. Doty and R. Kini}

\Copyright{David Doty and Ravi Kini}

\ccsdesc[500]{Theory of computation~Models of computation}

\keywords{chemical reaction networks, reverse-robust computation, semilinear}

\category{} %optional, e.g. invited paper

\relatedversion{} %optional, e.g. full version hosted on arXiv, HAL, or other respository/website
\nolinenumbers %uncomment to disable line numbering

\EventEditors{John Q. Open and Joan R. Access}
\EventNoEds{2}
\EventLongTitle{42nd Conference on Very Important Topics (CVIT 2016)}
\EventShortTitle{CVIT 2016}
\EventAcronym{CVIT}
\EventYear{2016}
\EventDate{December 24--27, 2016}
\EventLocation{Little Whinging, United Kingdom}
\EventLogo{}
\SeriesVolume{42}
\ArticleNo{23}
\begin{document}

\maketitle

\begin{abstract}
Chemical reaction networks, or CRNs, are known to stably compute semilinear Boolean-valued predicates and functions, provided that all reactions are irreversible. However, this property does not hold for wet-lab implementations, as all chemical reactions are reversible, even at very slow rates. We study the computational power of CRNs under the \emph{reverse-robust} computation model, where reactions are permitted to occur either in forward or in reverse up to a cutoff point, after which they may only occur in forward. Our main results show that all semilinear predicates and all semilinear functions can be computed reverse-robustly, and in fact, that existing constructions continue to hold under the reverse-robust computational model. A key tool used to prove correctness under the reverse-robust computation model is \emph{invariants}: linear (or linear modulo some $m$) combinations of the counts of the species that are preserved by all reactions.
\end{abstract}

\input{intro}
\input{prelim}
\input{results}
\input{conclusion}

\bibliography{ref}

\end{document}

%% file: head.tex
\usepackage{amsfonts,amsmath,amssymb}
\usepackage[
    textsize=scriptsize,
]{todonotes} 
\setuptodonotes{fancyline}
\def\rxn{\mathop{\rightarrow}\limits}
\renewcommand{\vec}[1]{\mathbf{#1}}

\newcommand{\vc}{\vec{c}}
\newcommand{\vy}{\vec{y}}

\newcommand{\reach}{\rightarrow}
\newcommand{\bireach}{\looparrowright}

\newcommand{\rev}[1]{{#1}^{\leftarrow}}

\newcommand{\reachrxn}[1]{\rightarrow^{#1}}

\def\longrightharpoonup{\relbar\joinrel\rightharpoonup}
\def\longleftharpoondown{\leftharpoondown\joinrel\relbar}
\def\longrightleftharpoons{\mathop{\vcenter{\hbox{\ooalign{\raise1pt\hbox{$\longrightharpoonup\joinrel$}\crcr\lower1pt\hbox{$\longleftharpoondown\joinrel$}}}}}}
\def\rxn{\mathop{\rightarrow}\limits}  % use as A+B \rxn^k C

\def\revrxn{\mathop{\rightleftharpoons}\limits}

%% file: intro.tex
\section{Introduction}
% \todo{DD: ensure we use ``state'' everywhere instead of ``configuration''}
Chemical reaction networks (CRNs) are a fundamental tool for understanding and designing molecular systems. By abstracting chemical reactions into a set of finite, rule-based transformations, CRNs allow us to model the behavior of complex chemical systems. For example, a CRN with the single reaction $X_1+X_2 \rxn Y$ produces one molecule of $Y$ whenever an $X_1$ reacts with an $X_2$. 
Interpreting the counts of $X_1,X_2$ as the vector-valued input and the count of $Y$ as the output, the CRN effectively calculates the function $f(x_1, x_2) = \min(x_1, x_2)$. Suppose the single reaction is instead $X_1+X_2 \rxn 2X_2$, where $X_2$ acts as a catalyst, producing another molecule of $X_2$ when it reacts with a $X_1$. Interpreting the presence of $X_1$ as ``no'' and the presence of $X_2$ as ``yes'', the CRN effectively calculates the predicate $\phi(x_1, x_2) = \text{yes} \Leftrightarrow x_2 > 0$.
% \todo{DD: explain example of predicate computation; note different notions of output in two types of computation}
It is known that precisely the \emph{semilinear} predicates $\phi : \mathbb{N}^k \to \{0, 1\}$ \cite{DBLP:journals/dc/AngluinADFP06} and functions $f : \mathbb{N}^k \to \mathbb{N}$ \cite{DBLP:journals/nc/ChenDS14} can be computed \emph{stably}, which roughly means that we can reach a correct state regardless of the reactions that occur or the order in which they occur.
More precisely,
we require that for every state $\vc$ reachable from the initial state encoding the input, there is a correct state $\vy$ reachable from $\vc$,
where $\vy$ is also \emph{stable},
meaning that every state reachable from $\vy$ has the same output as $\vy$.
Under the reasonable assumption that only a finite number of states are reachable from the initial state,
this simple combinatorial definition based on reachability is equivalent to requiring that the CRN \emph{will} reach a correct, stable state with probability 1.

Existing constructions \cite{DBLP:journals/dc/AngluinADFP06,DBLP:journals/nc/ChenDS14,DBLP:journals/nc/DotyH15} assume that reactions may not occur in reverse.
In reality, it is known that any chemical reaction that can happen in the forward direction, turning reactants into products, 
can also happen in reverse, turning the products back into the reactants, even if at a much slower rate.\footnote{
    The fully general model of CRNs associates to a reaction such as $A \revrxn^f_r B$ a forward \emph{rate constant} $f$ and reverse rate constant $r$.
    Without getting into details of the precise energy model,
    if $f > r$, this means that $B$ is more energetically stable than $A$,
    but to have $r=0$ (i.e., the reaction is completely irreversible) requires that $A$ be have energy \emph{infinitely} larger than $B$.
}
To address this incompatibility, we introduce the notion of \emph{reverse-robust} computation with CRNs,
in which reactions are allowed to occur in reverse, but we must always be able to reach a correct state using forward reactions.
We formalize this by generalizing the previous definition of stable computation to require that from any state $\vc$ reachable from the initial state by any combination of forward and reverse reactions,
there is a correct, stable state reachable using only forward reactions.
In other words, although the reverse reactions can take the CRN to states potentially unreachable through forward reactions alone, such reactions cannot prevent the CRN from reaching the correct answer.
We note that this definition allows reverse reactions only transiently; 
if they eventually stop, then forward reactions will always take the CRN to a correct, stable state.\footnote{
    Note that if reverse reactions are permitted to occur indefinitely,
    this would trivially prevent any state from being stable, since one could reverse out of any state possibly all the way back to the initial state.
}

To understand the distinction between this model and the stable computation model, consider the 
CRN $2X \rxn Y$, which stably computes the function $f(x) = \lfloor x/2 \rfloor$.
We then add two reactions involving a third species, $Z$:
\begin{align}
    \label{rxn:div_example_1}
    2X & \rxn Y \\
    \label{rxn:div_example_2}
    Z & \rxn Y \\
    \label{rxn:div_example_3}
    Z & \rxn \varnothing
\end{align}
Once again interpreting $X$ as the input and $Y$ as the output, under the stable computation model, since there are no copies of $Z$ present initially and no reaction produces $Z$, there is no way for reactions \eqref{rxn:div_example_2} and \eqref{rxn:div_example_3} to occur. 
Consequently, the CRN behaves identically to having only reaction \eqref{rxn:div_example_1},
and computes $f(x) = \lfloor x/2 \rfloor$. Consider, however, what happens when reactions are allowed to occur in reverse, under the reverse-robust computation model: when reaction \eqref{rxn:div_example_1} produces a $Y$, reaction \eqref{rxn:div_example_2} in reverse may convert the $Y$ to a $Z$, and reaction \eqref{rxn:div_example_3} (forward) may then annihilate that $Z$. 
It is then possible to reach a state where no molecules remain, from which no correct state can be reached by forward reactions.
Reactions \eqref{rxn:div_example_2} and \eqref{rxn:div_example_3} behave like a ``trap'', only disrupting the computation when reverse reactions are allowed.

% \todoi{DD: add paragraph explaining main results of the paper, and say a bit about how they are proven at a high level.

% - can compute exactly semilinear things

% - note (before or after) that clearly we can do no more than these since we are requiring stability against a strictly more powerful adversary than in stable computation, so main challenge is showing positive result

% - state a bit about how we prove existing (or small modifications to existing) constructions are reverse-robust (invariants preserved by reactions, thus preserved by reverse of reactions)

% }

We show that it is precisely the semilinear predicates and functions that can be computed reverse-robustly; that is, CRNs are equally as powerful under the reverse-robust computation model as under the stable computation model. Note that the capabilities of reverse-robustly computing CRNs are clearly limited by the established results about the capabilities of stably computing CRNs. We may think of stable computation as requiring that the CRN work against an adversary that may choose which reaction occurs at each intermediate state, only allowing reactions to occur in the forward direction. We can then think of reverse-robust computation as requiring that the CRN work against an adversary that may choose which reaction occurs at each intermediate state, allowing reactions to occur in both the forward and reverse directions. The adversary of the reverse-robust computation model is evidently more powerful than that of the stable computation model; the main challenge is then to show that against this more powerful adversary, we can compute anything that we could compute against the weaker adversary. In fact, we show that the existing constructions for CRNs that worked against the weaker adversary to compute semilinear predicates and functions continue to work against the stronger adversary. To prove the correctness of those constructions under the reverse-robust computation model, we use ``invariants'', quantities that are preserved by the reactions of a CRN (and thus, by the reverse of those reactions as well).

The paper is organized as follows. Section 2 formally defines what it means for a CRN to compute reverse-robustly (Definitions \ref{def:rrd} and \ref{def:rrc}) and introduces the concept of an invariant. Sections 3 and 4 contain the main positive results of the paper, and prove that the existing constructions used to decide semilinear sets and compute semilinear functions stably also work under the reverse-robust computation model. 

%% file: prelim.tex
\section{Preliminaries}
\input{prelim/notation}
\input{prelim/crns}
\input{prelim/stable-rr-computation}
\input{prelim/invariants}

%% file: prelim/notation.tex
\subsection{Notation}
Let $\mathbb{N}$ denote the nonnegative integers. For any finite set $\Lambda$, $\mathbb{N}^\Lambda$ denotes the set of functions $f : \Lambda \to \mathbb{N}$. Equivalently, $\vec{c} \in \mathbb{N}^\Lambda$ can be interpreted as a vector of $|\Lambda|$ nonnegative integers, where each element specifies the nonnegative integer count of an element of $\Lambda$. $\vec{c}(i)$ denotes the $i$-th coordinate of $\vec{c}$, and if $\vec{c}$ is indexed by elements of $\Lambda$, then $\vec{c}(Y)$ denotes the count of species $Y \in \Lambda$. We sometimes use multiset notation for such vectors, e.g., $\{A, 3C\}$ for the vector $\vec{c} = (1, 0, 3)$, assuming there are three species $A, B, C$, indexed in that order. For $\Sigma \subseteq \Lambda$, $\vec{i} |\Sigma$ denotes the restriction of $\vec{i}$ to $\Sigma$.

%\todo{DD: mention (see other papers of mine for wording) that we equivalently view a vector as a multiset and can therefore write set operations such as $\vec{p} \cap \vec{a} = \emptyset$ to express that $\vec{p}$ and $\vec{a}$ are disjoint.}
For two vectors $\vec{x}, \vec{y} \in \mathbb{R}^k$, we write $\vec{x} \geqq \vec{y}$ to indicate that $\vec{x}(i) \geq \vec{y}(i)$ for all $1 \leq i \leq n$, $\vec{x} \geq \vec{y}$ to indicate that $\vec{x} \geqq \vec{y}$ but $\vec{x} \neq \vec{y}$, and $\vec{x} > \vec{y}$ to indicate that $\vec{x}(i) > \vec{y}(i)$ for all $1 \leq i \leq n$. When $\vec{y} = \vec{0}$, we refer to $\vec{x}$ as nonnegative, semipositive, and positive, respectively. $\leqq, \leq, <$ are defined analogously.

%% file: prelim/crns.tex
\subsection{Chemical reaction networks}
A \emph{chemical reaction network} (CRN) is a pair $\mathcal{C} = (\Lambda, R)$ where $\Lambda$
% = \{\lambda_1, \ldots, \lambda_k\}$ 
is a finite set of \emph{chemical species} and 
$R$
%= \{r_1, \ldots, r_d\}$ 
is a finite set of (forward) reactions over $\Lambda$, where a \emph{(forward) reaction} is a pair $r = (\vec{a}, \vec{p}) \in \mathbb{N}^\Lambda \times \mathbb{N}^\Lambda$ where $\vec{a}$ indicates the \emph{reactants} and $\vec{p}$ the \emph{products}.
%When written as $\vec{r}$, we understand it to mean $\vec{r} = \vec{p} - \vec{a}$. 
For a reaction $r = (\vec{a}, \vec{p})$, we use $\rev{r} = (\vec{p}, \vec{a})$ to denote the corresponding \emph{reverse reaction}.
% Henceforth, we use \emph{reaction} when a statement is applicable to both forward and reverse reactions.

A \emph{configuration} $\vec{c} \in \mathbb{N}^\Lambda$ assigns nonnegative integer counts to every species $\lambda \in \Lambda$.
A reaction $r = (\vec{a}, \vec{p})$ is \emph{applicable} to a configuration $\vec{x}$ if $\vec{x} \geqq \vec{a}$; applying $r$ to $\vec{x}$ then results in the configuration $\vec{y} = \vec{x} - \vec{a} + \vec{p}$.
We say $r$ is \emph{reverse-applicable} to $\vec{x}$ if $\vec{x} \geqq \vec{p}$. Equivalently, we may say that $\rev{r}$ is applicable to $\vec{x}$.
We say $r$ is \emph{bi-applicable} to $\vec{x}$ if it is either applicable or reverse applicable to $\vec{x}$.
If some reaction $r$ is applicable or reverse-applicable to $\vec{x}$ and (reverse-)applying it results in the configuration $\vec{y}$, we write $\vec{x} \bireach^1 \vec{y}$. If it is known that $r$ was applicable, we write $\vec{x} \reach^1 \vec{y}$.
A \emph{bi-execution} $\mathcal{E}$ is a finite or infinite sequence of configurations $(\mathbf{c}_0, \mathbf{c}_1, \mathbf{c}_2, \ldots)$ such that for all $j \in \{1, \ldots, |\mathcal{E}| - 1\}$, $\mathbf{c}_{j-1} \bireach^1 \mathbf{c}_j$. A \emph{(forward-)execution} is defined analogously, instead requiring that $\mathbf{c}_{j-1} \reach \mathbf{c}_j$. A configuration $\mathbf{y}$ is \emph{bireachable} from another configuration $\mathbf{x}$ (written as $\vec{x} \bireach \vec{y}$) if there exists some finite bi-execution that starts at $\mathbf{x}$ and ends at $\mathbf{y}$, and \emph{reachable} (written as $\vec{x} \reach \vec{y}$) if there exists such a forward-execution. Note that both bireachability and reachability are additive: if $\vec{x} \bireach \vec{y}$, for all $\vec{z} \in \mathbb{N}^\Lambda$, $\vec{x} + \vec{z} \bireach \vec{y} + \vec{z}$, and analogously for when $\vec{x} \reach \vec{y}$. It also follows trivially that both bireachability and reachability are transitive; however, bireachability is clearly reflexive, as we can simply reverse-apply the reactions moving backwards through the sequence of configurations, while the same does not apply for reachability.

%% file: prelim/stable-rr-computation.tex
\subsection{Stable and reverse-robust computation with CRNs}
To perform computation using CRNs, we define a way to interpret the final configuration after letting forward reactions until the result can no longer change (characterized below as \emph{stable computation}). We define \emph{reverse-robust computation} analogously, letting both forward and reverse reactions occur until the result can no longer change (through forward reactions). Computation using CRNs primarily involves the evaluation of \emph{predicates} $\phi : \mathbb{N}^k \to \{0, 1\}$ and \emph{functions} $f : \mathbb{N}^k \to \mathbb{N}$.

The definitions below reference \emph{input species} $\Sigma \subseteq \Lambda$ and an \emph{initial context} $\vec{s} \in \mathbb{N}^{\Lambda \setminus \Sigma}$. If $\vec{s} = \vec{0}$, we say that the CRN is \emph{leaderless}. We may assume without loss of generality that if the initial context is nonzero, it contains a single copy of some \emph{leader species} $L$. In both cases, we say that $\vec{i} \in \mathbb{N}^\Lambda$ is a \emph{valid initial configuration} if $\vec{i} = \vec{s} + \vec{x}$ where $\vec{x}(\lambda_j) = 0$ for all $\lambda_j \in \Lambda \setminus \Sigma$; i.e., $\vec{i}$ contains only input species in addition to the initial context.

A \emph{chemical reaction decider} (CRD) is a tuple $\mathcal{D} = (\Lambda, R, \Sigma, \Upsilon_1, \Upsilon_0, \vec{s})$ where $(\Lambda, R)$ is a CRN, $\Sigma \subseteq \Lambda$ is the set of input species, $\Upsilon_1$ is the set of \emph{yes voters}, $\Upsilon_0$ is the set of \emph{no voters}, and $\vec{s} \in \mathbb{N}^{\Lambda \setminus \Sigma}$ is the initial context. In the context of CRDs, we use \emph{output species} to refer to $\Gamma = \Upsilon_0 \cup \Upsilon_1$, i.e. the set of voters. We define a global output partial function $\Phi : \mathbb{N}^\Lambda \to \{0, 1\}$ as follows: $\Phi(\vec{c}) = 0$ if $\vec{c}(\lambda_j) > 0$ for some $\lambda_j \in \Upsilon_0$ and $\vec{c}(\lambda_j') = 0$ for all $\lambda_j' \in \Upsilon_1$, $\Phi(\vec{c}) = 1$ if $\vec{c}(\lambda_j) > 1$ for some $\lambda_j \in \Upsilon_1$ and $\vec{c}(\lambda_j') = 0$ for all $\lambda_j' \in \Upsilon_0$, and is undefined otherwise. In other words, a unanimous vote is required for an output. We say a configuration $\vec{c}$ is \emph{stable} if for all $\vec{c}'$ such that $\vec{c} \reach \vec{c}'$, $\Phi(\vec{c}) = \Phi(\vec{c}')$. A CRD $\mathcal{D}$ is said to \emph{stably decide} the predicate $\phi : \mathbb{N}^\Sigma \to \{0, 1\}$ if for any valid initial configuration $\vec{i} \in \mathbb{N}^\Lambda$, letting $\vec{i}_0 = \vec{i} | \Sigma$, for all $\vec{c} \in \mathbb{N}^\Lambda$ such that $\vec{i} \reach \vec{c}$, there exists some $\vec{c}'$ such that $\vec{c} \reach \vec{o}$ where $\vec{o}$ is stable and $\Phi(\vec{o}) = \phi(\vec{i}_0)$.
\begin{definition}\label{def:rrd}
A CRD $\mathcal{D}$ is said to \emph{reverse-robustly decide} the predicate $\phi : \mathbb{N}^\Sigma \to \{0, 1\}$ if for any valid initial configuration $\vec{i} \in \mathbb{N}^\Lambda$, letting $\vec{i}_0 = \vec{i} | \Sigma$, for all $\vec{c} \in \mathbb{N}^\Lambda$ such that $\vec{i} \bireach \vec{c}$, there exists some $\vec{c}'$ such that $\vec{c} \reach \vec{o}$ where $\vec{o}$ is stable and $\Phi(\vec{o}) = \phi(\vec{i}_0)$.
\end{definition}

A \emph{chemical reaction computer} (CRC) is a tuple $\mathcal{C} = (\Lambda, R, \Sigma, Y, \vec{s})$ where $(\Lambda, R)$ is a CRN, $\Sigma \subseteq \Lambda$ is the set of input species, $Y \in \Lambda$ is the  \emph{output species}, and $\vec{s} \in \mathbb{N}^{\Lambda \setminus \Sigma}$ is the initial context. We say a configuration $\vec{c}$ is \emph{stable} if for all $\vec{c}'$ such that $\vec{c} \reach \vec{c}'$, $\vec{c}(Y) = \vec{c}'(Y)$. A CRC $\mathcal{C}$ is said to \emph{stably compute} the function $f : \mathbb{N}^\Sigma \to \mathbb{N}$ if for any valid initial configuration $\vec{i} \in \mathbb{N}^\Lambda$, letting $\vec{i}_0 = \vec{i} | \Sigma$, for all $\vec{c} \in \mathbb{N}^\Lambda$ such that $\vec{i} \reach \vec{c}$, there exists some $\vec{c}'$ such that $\vec{c} \reach \vec{o}$ where $\vec{o}$ is stable and $\vec{o}(Y) = f(\vec{i}_0)$.
\begin{definition}\label{def:rrc}
A CRC $\mathcal{C}$ is said to \emph{reverse-robustly compute} the function $f : \mathbb{N}^\Sigma \to \mathbb{N}$ if for any valid initial configuration $\vec{i} \in \mathbb{N}^\Lambda$, letting $\vec{i}_0 = \vec{i} | \Sigma$, for all $\vec{c} \in \mathbb{N}^\Lambda$ such that $\vec{i} \bireach \vec{c}$, there exists some $\vec{c}'$ such that $\vec{c} \reach \vec{o}$ where $\vec{o}$ is stable and $\vec{o}(Y) = f(\vec{i}_0)$.
\end{definition}

Henceforth, we use CRN when a statement is applicable to either CRCs or CRDs. We can then say that a CRN computes stably if for all valid input configurations $\vec{i}$, for all configurations $\vec{c}$ such that $\vec{i} \reach \vec{c}$, there exists some stable correct configuration $\vec{o}$ such that $\vec{c} \reach \vec{o}$. Analogously, a CRN computes reverse-robustly if for all valid input configurations $\vec{i}$, for all configurations $\vec{c}$ such that $\vec{i} \bireach \vec{c}$, there exists some stable correct configuration $\vec{o}$ such that $\vec{c} \reach \vec{o}$.

% Reverse-robust computation is defined analogously:
% \todo{DD: I'd prefer to separate the definitions of deciding predicates and computing functions.}
% \begin{definition}
% A CRN \emph{reverse-robustly computes} if for all valid input configurations $\vec{i}$, for all configurations $\vec{c}$ such that $\vec{i} \bireach \vec{c}$, there exists some stable correct configuration $\vec{o}$ such that $\vec{c} \reach \vec{o}$.
% \end{definition}

\subsection{``Shuffling'' executions}
The ability to parallelize computation is a key technique used in designing CRNs that compute stably.
By this we mean the common trick of splitting input species $X_i$ into two variants via a reaction $X_i \rxn X_{i,1} + X_{i,2}$, 
where $X_{i,j}$ is the $i$'th input species for CRN $C_j$, and $C_1$ and $C_2$ then run in parallel on their respective copies of the input species.
We show that this technique continues to work under the reverse-robust computation model.

We first prove a technical lemma about reachability that will help in showing this.
If $\vec{c}_0 \reachrxn{r_1} \vec{c}_1 \reachrxn{r_2} \vec{c}_2$,
then we cannot in general commute the reactions since $r_2$ may not be applicable to $\vec{c}_0$.
Furthermore, even if $r_2$ \emph{is} applicable to $\vec{c}$ and $\vec{c}_0 \reachrxn{r_2} \vec{c}_1'$,
then $r_1$ may not be applicable to $\vec{c}_1'$.
However, if none of the products of $r_1$ are reactants of $r_2$, then in fact the reactions can be commuted as shown next.

\begin{lemma}\label{lem:rxn-swap}
Let $\vec{c}_0, \vec{c}_1, \vec{c}_2$ such that $\vec{c}_0 \reachrxn{r_1} \vec{c}_1 \reachrxn{r_2} \vec{c}_2$,
where 
$r_1 = (\vec{a}_1, \vec{p}_1)$
and
$r_2 = (\vec{a}_2, \vec{p}_2)$.
If $\vec{p}_1 \cap \vec{a}_2 = \emptyset$ (none of the products of $r_1$ are reactants of $r_2$),
then there is some $\vec{c}_1'$ such that $\vec{c}_0 \reachrxn{r_2} \vec{c}_1' \reachrxn{r_1} \vec{c}_2$.
\end{lemma}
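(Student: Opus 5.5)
We define $\vec{c}_1' = \vec{c}_0 - \vec{a}_2 + \vec{p}_2$ and verify directly the two facts needed: $r_2$ is applicable to $\vec{c}_0$, and $r_1$ is applicable to $\vec{c}_1'$. The endpoint then comes for free, because vector addition commutes:
\[
\vec{c}_1' - \vec{a}_1 + \vec{p}_1 = \vec{c}_0 - \vec{a}_1 + \vec{p}_1 - \vec{a}_2 + \vec{p}_2 = \vec{c}_2 .
\]
So the whole argument reduces to two componentwise inequalities.

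\textbf{Applicability of $r_2$ to $\vec{c}_0$.} We must show $\vec{c}_0 \geqq \vec{a}_2$. We know $\vec{c}_1 = \vec{c}_0 - \vec{a}_1 + \vec{p}_1 \geqq \vec{a}_2$ because $r_2$ is applicable to $\vec{c}_1$. Fix a species $\lambda$ with $\vec{a}_2(\lambda) > 0$. By the hypothesis $\vec{p}_1 \cap \vec{a}_2 = \emptyset$, we have $\vec{p}_1(\lambda) = 0$. Hence
\[
\vec{a}_2(\lambda) \le \vec{c}_1(\lambda) = \vec{c}_0(\lambda) - \vec{a}_1(\lambda) \le \vec{c}_0(\lambda).
\]
For species with $\vec{a}_2(\lambda) = 0$ the inequality is trivial. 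This is the only place the disjointness hypothesis is used, and it is the key step.

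\textbf{Applicability of $r_1$ to $\vec{c}_1'$.} We must show $\vec{c}_1' \geqq \vec{a}_1$, that is, $\vec{c}_0(\lambda) - \vec{a}_2(\lambda) + \vec{p}_2(\lambda) \ge \vec{a}_1(\lambda)$ for every $\lambda$. We split into two cases.

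\emph{Case $\vec{a}_2(\lambda) = 0$.} Then $\vec{c}_1'(\lambda) = \vec{c}_0(\lambda) + \vec{p}_2(\lambda) \ge \vec{c}_0(\lambda) \ge \vec{a}_1(\lambda)$, since $r_1$ is applicable to $\vec{c}_0$.

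\emph{Case $\vec{a}_2(\lambda) > 0$.} Again $\vec{p}_1(\lambda) = 0$, so $\vec{c}_1(\lambda) = \vec{c}_0(\lambda) - \vec{a}_1(\lambda) \ge \vec{a}_2(\lambda)$. Rearranging gives $\vec{c}_0(\lambda) - \vec{a}_2(\lambda) \ge \vec{a}_1(\lambda)$. Adding the nonnegative term $\vec{p}_2(\lambda)$ yields $\vec{c}_1'(\lambda) \ge \vec{a}_1(\lambda)$.

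\textbf{Where care is needed.} There is no real obstacle. The one point to watch is that the multiset-intersection notation $\vec{p}_1 \cap \vec{a}_2 = \emptyset$ must be read as: for each species, $\vec{p}_1(\lambda) = 0$ or $\vec{a}_2(\lambda) = 0$. The second case above relies on exactly this reading. In that case the inequality $\vec{c}_0(\lambda) \ge \vec{a}_1(\lambda) + \vec{a}_2(\lambda)$ holds, meaning both reactions' reactants are simultaneously present in $\vec{c}_0$.
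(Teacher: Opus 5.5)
Your proof is correct and takes essentially the same route as the paper: both rest on the componentwise bound $\vec{c}_0 \geqq \vec{a}_1 + \vec{a}_2$ obtained from the disjointness hypothesis, which the paper establishes up front by splitting on whether $\vec{p}_1(\lambda) > 0$, whereas you split on whether $\vec{a}_2(\lambda) > 0$. You also explicitly verify that the resulting configuration equals $\vec{c}_2$, a step the paper leaves implicit.
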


\begin{proof}
Since $\vec{c}_0 \reachrxn{r_1} \vec{c}_1$ and $\vec{c}_1 \reachrxn{r_2} \vec{c}_2$, $\vec{c}_0 \geqq \vec{a}_1$ and $\vec{c}_1 \geqq \vec{a}_2$.
If $\vec{p}_1(\lambda) > 0$ ($\lambda$ is a product of $r_1$), since we have assumed by hypothesis that is it not a reactant of $r_2$, 
then $\vec{a}_2(\lambda) = 0$. 
Consequently, $\vec{a}_1(\lambda) + \vec{a}_2(\lambda) = \vec{a}_1(\lambda)$, and so $\vec{c}_0(\lambda) \geq \vec{a}_1(\lambda) + \vec{a}_2(\lambda)$.
On the other hand, if $\lambda$ is not a product of $r_1$ ($\vec{p}_1(\lambda) = 0$)
\begin{align*}
    \vec{c}_0(\lambda) & = \vec{c}_1(\lambda) + \vec{a}_1(\lambda) - \vec{p}_1(\lambda) 
    \\&= \vec{c}_1(\lambda) + \vec{a}_1(\lambda) 
    && \text{since } \vec{p}_1(\lambda) = 0
    \\&\geq
    \vec{a}_2(\lambda) + \vec{a}_1(\lambda).
    && \text{since } \vec{c}_1(\lambda) \ge \vec{a}_2(\lambda)
\end{align*}
It then follows that $\vec{c}_0 \geqq \vec{a}_1 + \vec{a}_2$ implying $\vec{c}_0 \geqq \vec{a}_2$ since $\vec{a}_1 \geqq \vec{0}$.
Consequently, $r_2$ is applicable to $\vec{c}_0$; 
let $\vec{c}_1' = \vec{c}_0 - \vec{a}_2 + \vec{p}_2$ be the resulting configuration. 
It remains to show $r_1$ is applicable to $\vec{c}_1',$
i.e., $\vec{c}_1' \geqq \vec{a}_1.$
We have
\begin{align*}
    \vec{c}_1' 
    &=
    \vec{c}_0 - \vec{a}_2 + \vec{p}_2
    \\&\geqq
    (\vec{a}_1 + \vec{a}_2) - \vec{a}_2 + \vec{p}_2
    &&
    \text{since }
    \vec{c}_0 \geqq \vec{a}_1 + \vec{a}_2
    \\&=
    \vec{a}_1 + \vec{p}_2 \geqq \vec{a}_1.
    &&
    \text{since } 
    \vec{p}_2 \geqq \vec{0} \qedhere
\end{align*}
\end{proof}
By a similar argument, we may commute forward reactions with reverse reactions, provided that none of the products of the reaction that occurs first are reactants of the reaction that occurs second.

Intuitively, when a forward reaction occurs consecutively with its reverse reaction, the pair may be canceled out, producing a bi-execution where those reactions do not occur.
% \todo{DD: make sure the previous sentence belongs here; I'm still not sure this belongs here.}
We now show that duplicating the input species allows for the parallel reverse-robust computation of multiple predicates or functions.

% \todo{DD: need to formalize what we mean by what the resulting CRN is doing.}
\begin{lemma}\label{lem:parallel-comp}
Let $\mathcal{C}_1, \mathcal{C}_2$ be some reverse-robustly computing CRNs with input species $\{X_1, \ldots, X_k\}$. For clarity, rename $X_i$ to $X_{i,1}$ in $\mathcal{C}_1$ and $X_{i,2}$ and $\mathcal{C}_2$. Let $\mathcal{C}$ be the CRN formed by combining the reactions of $\mathcal{C}_1$ and $\mathcal{C}_2$ and adding the reaction $r_{i,0}$, for each $i \in \{1, \ldots, k\}$:
\begin{align}\label{rxn:parallel-split}
    X_i \rxn X_{i,1} + X_{i,2}
\end{align}
Then for all valid input configurations $\vec{i}$, for all configurations $\vec{c}$ such that $\vec{i} \bireach \vec{c}$, there exists some stable configuration $\vec{o}$ such that $\vec{c} \reach \vec{o}$ and the individual outputs of both $\mathcal{C}_1$ and $\mathcal{C}_2$ are correct.
\end{lemma}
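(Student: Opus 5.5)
The plan is to handle the one new feature of the combined CRN $\mathcal{C}$: the reverse of the splitting reaction \eqref{rxn:parallel-split}, $X_{i,1} + X_{i,2} \rxn X_i$. It can recombine a copy of the input belonging to $\mathcal{C}_1$ with one belonging to $\mathcal{C}_2$, so the two subcomputations are no longer independent. I will \emph{project} every bi-execution of $\mathcal{C}$ onto a bi-execution of each $\mathcal{C}_j$. I assume, renaming if necessary, that the species of $\mathcal{C}_1$ and $\mathcal{C}_2$ are disjoint apart from the fresh species $X_i$. I also assume the initial context of $\mathcal{C}$ is the sum $\vec{s}_1 + \vec{s}_2$ of the two contexts.

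For $j \in \{1,2\}$, define $\pi_j$ from configurations of $\mathcal{C}$ to configurations of $\mathcal{C}_j$ by
\[
\pi_j(\vec{c}) = \vec{c}|\Lambda_j + \textstyle\sum_i \vec{c}(X_i)\,\{X_{i,j}\},
\]
so an unsplit $X_i$ is counted as one copy of $X_{i,j}$. The key step is to check, one reaction type at a time, that $\vec{c} \bireach^1 \vec{c}'$ in $\mathcal{C}$ implies either $\pi_j(\vec{c}) = \pi_j(\vec{c}')$ or $\pi_j(\vec{c}) \bireach^1 \pi_j(\vec{c}')$ in $\mathcal{C}_j$.
\begin{itemize}
\item The split reaction, in either direction, trades $X_i$ for $X_{i,j}$ (plus a species invisible to $\pi_j$), so it leaves $\pi_j$ unchanged. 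In this sense $\vec{c}(X_i) + \vec{c}(X_{i,j})$ behaves like an invariant of the split.
\item Reactions of the other CRN $\mathcal{C}_{3-j}$ do not touch $\Lambda_j$ or $X_i$, so they also leave $\pi_j$ unchanged.
\item A forward or reverse reaction of $\mathcal{C}_j$ changes $\pi_j$ by exactly its own effect. It remains (reverse-)applicable to $\pi_j(\vec{c})$ because $\pi_j(\vec{c}) \geqq \vec{c}|\Lambda_j$.
\end{itemize}
Since $\pi_j(\vec{i})$ is the valid initial configuration of $\mathcal{C}_j$ on the same input, induction along the bi-execution gives $\pi_j(\vec{i}) \bireach \pi_j(\vec{c})$ in $\mathcal{C}_j$ for every $\vec{c}$ with $\vec{i} \bireach \vec{c}$.

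Given such a $\vec{c}$, I first apply the forward split reaction to every remaining $X_i$, reaching $\vec{c}^*$. This does not change $\pi_1$ or $\pi_2$. Moreover $\vec{c}^*$ has no $X_i$, so $\vec{c}^* = \pi_1(\vec{c}) + \pi_2(\vec{c})$ as a sum over disjoint species sets. By reverse-robustness of $\mathcal{C}_j$ there is a forward execution $\pi_j(\vec{c}) \reach \vec{o}_j$ with $\vec{o}_j$ stable and correct for $\mathcal{C}_j$. By additivity of reachability, running these two executions one after the other from $\vec{c}^*$ gives $\vec{c} \reach \vec{c}^* \reach \vec{o}_1 + \vec{o}_2 =: \vec{o}$.

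For stability of $\vec{o}$, note that no forward reaction produces $X_i$, so from $\vec{o}$ only reactions of $\mathcal{C}_1$ and $\mathcal{C}_2$ can fire. These act on disjoint species, so any $\vec{o} \reach \vec{o}'$ decomposes as $\vec{o}' = \vec{o}_1' + \vec{o}_2'$ with $\vec{o}_j \reach \vec{o}_j'$ in $\mathcal{C}_j$. By stability of each $\vec{o}_j$, each individual output is unchanged.

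I expect the main obstacle to be the projection step, specifically making sure the reverse split reaction cannot give $\mathcal{C}_j$ more power than its own bi-executions already have. The counting of $X_i$ inside $\pi_j$ is chosen exactly to make that step go through. The last point to settle is leaders: if both $\mathcal{C}_1$ and $\mathcal{C}_2$ have a leader, the combined context $\vec{s}_1 + \vec{s}_2$ contains two leaders. I would handle this by allowing that context, or by adding a split reaction $L \rxn L_1 + L_2$ treated exactly like an input split. Under $\pi_j$ that reaction again acts as the identity.
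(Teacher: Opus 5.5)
Your proof is correct, but it takes a different route from the paper.

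\textbf{The paper's route.} The paper stays inside $\mathcal{C}$ and rearranges the bi-execution. It first assumes $\vec{c}$ contains no $X_i$. It then pairs a reverse split $\rev{r}_{i,0}$ with a later forward $r_{i,0}$ and commutes the reverse step forward past the intervening steps, using Lemma~\ref{lem:rxn-swap}, which it extends to mixed forward/reverse steps only ``by a similar argument''. It cancels the pair, then moves all forward splits to the front. This gives $\vec{i} \reach \vec{i}^* \bireach \vec{c}$ with no split reactions after $\vec{i}^*$. From there it says the two CRNs ``can be run independently'', which it calls easy to see.

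\textbf{Your route.} You simulate instead of rearranging. The linear map $\pi_j$ kills the net change of the split reaction in both directions and of every reaction of $\mathcal{C}_{3-j}$. So each step of a bi-execution of $\mathcal{C}$ becomes either a stutter or one step of $\mathcal{C}_j$, and reverse-robustness of $\mathcal{C}_j$ applies to $\pi_j(\vec{c})$ directly.

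\textbf{What each buys.} Your approach needs no surgery on executions. You never choose which reverse/forward pair to cancel, never commute reverse steps, and never count to show that the intermediate configuration equals $\vec{i}^*$. It also makes explicit two steps the paper leaves informal: the independence of the two components and the stability of $\vec{o}_1 + \vec{o}_2$. The paper's route yields a normal form for bi-executions of $\mathcal{C}$: split everything first and never unsplit. That normal form also follows from your argument by additivity, since $\pi_1(\vec{i}) \bireach \pi_1(\vec{c})$ and $\pi_2(\vec{i}) \bireach \pi_2(\vec{c})$ combine to $\vec{i}^* \bireach \vec{c}^*$.

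Your remark on leaders, adding a split $L \rxn L_1 + L_2$ that $\pi_j$ again treats as the identity, covers a case the paper's statement leaves implicit. All of the paper's constructions are leaderless.
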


\begin{proof}
Let $\vec{i}$ be an arbitrary initial configuration, where $\vec{i}(X_i) = x_i$. It is easy to see that if we start with only $x_i$ each of $X_{i,1}$ and $X_{i,2}$ in configuration $\vec{i}^*$, the CRNs can be run independently, and are able to reverse-robustly compute their respective functions/predicates in parallel. Then for any $\vec{c}^*$ such that $\vec{i}^* \bireach \vec{c}^*$, there exists some stable $\vec{o}^*$ such that $\vec{c}^* \reach \vec{o}^*$ and the individual outputs are correct.

Now let $\vec{c}$ such that $\vec{i} \bireach \vec{c}$. Note that by repeatedly applying $r_{i,0}$, we can reach some $\vec{c}'$ such that $\vec{c} \reach \vec{c}'$ and $r_{i,0}$ is no longer applicable; we may then assume that in $\vec{c}$, $r_{i,0}$ are no longer applicable, and so $X_i$ are not present.

We first argue that $\vec{i} \bireach \vec{c}$ by an execution where $r_{i,0}$ is never applied in reverse for all $i$. Let $\mathcal{E} = (\vec{c}_0, \vec{c}_1, \ldots, \vec{c}_n)$ be an execution where $r_{i,0}$ is reverse-applied (with $\vec{c}_0 = \vec{i}$ and $\vec{c}_n = \vec{c}$) for some $i$. Note that the only reaction where the products of $\rev{r}_{i,0}$ are reactants is $r_{i,0}$, and that those are the only reactions in which $X_i$ is involved; this affords us a great deal of flexibility in shuffling the order of reactions.

Note that we can always select $j$ and $k$ where $j < k$ such that $\vec{c}_j \bireach^{\rev{r}_{i,0}} \vec{c}_{j+1}$ and $\vec{c}_k \reach^{r_{i,0}} \vec{c}_{k+1}$ and $\vec{c}_{j+1} \bireach \vec{c}_k$ by a bi-execution where $r_{i,0}$ does not occur in forward or in reverse. This follows from the fact that $X_i$ is not present in $\vec{c}$; if $\rev{r}_{i,0}$ occurred, $r_{i,0}$ must have occurred afterwards, as it is the only way $X_i$ can be consumed.

We then have that:
\begin{align*}
    \vec{c}_j \bireach^{\rev{r}_{i,0}} \vec{c}_{j+1} \bireach \vec{c}_k \reach^{r_{i,0}} \vec{c}_{k+1}
\end{align*}
where $\vec{c}_{j+1} \bireach \vec{c}_k$ by a bi-execution where $r_{i,0}$ does not occur in forward or in reverse. Since none of the products of $\rev{r}_{i,0}$ ($X_i$) are reactants of any of the reactions that occur between $\vec{c}_{j+1}$ and $\vec{c}_k$, we can ``shuffle'' $\rev{r}_{i,0}$ forward using Lemma \ref{lem:rxn-swap} such that:
\begin{align*}
    \vec{c}_j \bireach \vec{c}_{k-1}' \bireach^{\rev{r}_{i,0}} \vec{c}_k \reach^{r_{i,0}} \vec{c}_{k+1}
\end{align*}
where $\vec{c}_{j} \bireach \vec{c}_{k-1}'$ by a bi-execution where $r_{i,0}$ does not occur in forward or in reverse. We may then cancel out $\rev{r}_{i,0}$ and $r_{i,0}$, and so $\vec{c}_{j} \bireach \vec{c}_{k+1}$ by a bi-execution where $r_{i,0}$ does not occur in forward or in reverse.

By repeatedly canceling out any reverse-applications of $r_{i,0}$ in this manner, we see that $\vec{i} \bireach \vec{c}$ by an execution where $\rev{r}_{i,0}$ never occurs for all $i$.

We now further argue that $\vec{i} \bireach \vec{c}$ by an execution where $r_{i,0}$ is never reverse-applied for all $i$ and $\vec{i}^*$ is reached at some point. Since we have an execution that starts at $\vec{i}$ and ends at $\vec{c}$ where $\rev{r}_{i,0}$ never occurs for all $i$, no reactions that have the reactants of $r_{i,0}$ ($X_i$) as products occur for all $i$. This allows us to shuffle the applications of $r_{i,0}$ to the beginning of the execution using Lemma \ref{lem:rxn-swap}:
\begin{align*}
    \vec{i} \reach \vec{i}' \bireach \vec{c}
\end{align*}
where $\vec{i} \reach \vec{i}'$ by a forward-execution where only $r_{i,0}$ are applied and $\vec{i}' \bireach \vec{c}$ by a bi-execution where $r_{i,0}$ does not occur in forward or in reverse for all $i$. It is easy to see that $\vec{i}' = \vec{i}^*$, and so it follows that there exists some stable $\vec{o}$ such that $\vec{c} \reach \vec{o}$ and the individual outputs are correct. It follows that $\vec{c} \reach \vec{o}$. Consequently, for all valid input configurations $\vec{i}$, for all configurations $\vec{c}$ such that $\vec{i} \bireach \vec{c}$, there exists some stable configuration $\vec{o}$ such that $\vec{c} \reach \vec{o}$ and the individual outputs of both $\mathcal{C}_1$ and $\mathcal{C}_2$ are correct.
\end{proof}

%% file: prelim/invariants.tex
\subsection{Invariants}
For an arbitrary CRN, we define an invariant to be some function $I : \mathbb{N}^\Lambda \to \mathbb{N}$ such that for all reactions $r = (\vec{a}, \vec{p})$, $I(-\vec{a}+\vec{p}) = 0$. Note that if $I$ is an invariant, so is $kI$ for all $k \in \mathbb{N}$ and if $I_1, I_2$ are invariants, so is $I_1 + I_2$. This paper uses what we refer to as linear and modular invariants:
\begin{itemize}
    \item An invariant is \emph{linear} if it is of the form $I_{\text{linear}}(-\vec{a}+\vec{p}) = \sum_{j=1}^k w_j(-\vec{a}(\lambda_j)+\vec{p}(\lambda_j))$ for $w_1, \ldots, w_k \in \mathbb{Z}$.
    \item An invariant is \emph{modular} if it is of the form \\$I_{\text{modular}}(-\vec{a}+\vec{p}) = \left(\sum_{j=1}^k w_j(-\vec{a}(\lambda_j)+\vec{p}(\lambda_j))\right) \bmod m$ for $w_1, \ldots, w_k \in \mathbb{Z}$ and $m \in \mathbb{N}$. We refer to $m$ as the \emph{modulus} of the invariant.
\end{itemize}
Further, we refer to a linear function as the \emph{linearization} of a modular invariant if taking the linear function modulo $m$ yields the modular invariant. For example, in the following CRN:
\begin{align}
    2A_0 & \rxn A_0 \\
    A_0 + A_1 & \rxn A_1\\
    2A_1 & \rxn A_0\\
    2B & \rxn B + C
\end{align}
Examining reactions (5)-(7), note that when $A_i$ reacts with $A_j$, the product is $A_{i + j \bmod 2}$. Similarly, examining reaction (8), note that $B$ acts as a catalyst to turn copies of $B$ into copies of $C$. We then see that $I_1(-\vec{a}+\vec{p}) = \left(\sum_{i=0}^1 i(-\vec{a}(A_i) + \vec{p}(A_i))\right) \bmod 2$ is a modular invariant and that $I_2(-\vec{a}+\vec{p}) = -\vec{a}(B) + \vec{p}(B) - \vec{a}(C) + \vec{p}(C)$ is a linear invariant. The linearization of $I_1$ is $I_{1,\text{lin}} = \sum_{i=0}^1 i(-\vec{a}(A_i) + \vec{p}(A_i))$.

\begin{observation}
For a linear or modular invariant $I$, it is equivalent to state that for an initial configuration $\vec{i}$ and all $\vec{c}$ such that $\vec{i} \bireach \vec{c}$, $I(\vec{c})$ is constant and equal to $I(\vec{i})$.
\end{observation}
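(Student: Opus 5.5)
The observation asserts that two formulations agree. The first is the definition: $I(-\vec{a}+\vec{p}) = 0$ for every reaction $r=(\vec{a},\vec{p})$, where for a modular invariant this is understood modulo $m$. The second is that $I(\vec{c}) = I(\vec{i})$ for every $\vec{c}$ with $\vec{i} \bireach \vec{c}$. The plan is to prove each direction separately. Throughout, $I$ is identified with its linearization $L(\vec{v}) = \sum_j w_j \vec{v}(\lambda_j)$, a $\mathbb{Z}$-linear map on $\mathbb{Z}^\Lambda$. In the modular case $I = L \bmod m$.

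For the forward direction, first treat a single step $\vec{x} \bireach^1 \vec{y}$. If some reaction $r=(\vec{a},\vec{p})$ was applied, then $\vec{y} = \vec{x} + (-\vec{a}+\vec{p})$. By linearity, $L(\vec{y}) = L(\vec{x}) + L(-\vec{a}+\vec{p})$, and the last term is $0$, or $\equiv 0 \pmod m$ in the modular case. If instead $\rev{r}$ was applied, then $\vec{y} = \vec{x} - (-\vec{a}+\vec{p})$, and $L(-(-\vec{a}+\vec{p})) = -L(-\vec{a}+\vec{p})$ is again $0$, or $0$ modulo $m$. This sign symmetry is the key reason reverse reactions also preserve the invariant. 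For a general $\vec{i} \bireach \vec{c}$, induct along the finite bi-execution $\vec{c}_0=\vec{i},\ldots,\vec{c}_n=\vec{c}$: each step preserves $I$, so $I(\vec{c}) = I(\vec{i})$.

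For the converse, assume $I$ is constant on every bireachability class. Fix a reaction $r=(\vec{a},\vec{p})$ and choose the configuration $\vec{x}=\vec{a}$. Then $r$ is applicable to $\vec{x}$, so $\vec{a} \bireach \vec{p}$. The hypothesis gives $I(\vec{p}) = I(\vec{a})$, and linearity yields $I(-\vec{a}+\vec{p}) = I(\vec{p}) - I(\vec{a}) = 0$, with the subtraction taken modulo $m$ in the modular case.

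There is no substantial obstacle. The one point that needs care is notational: the paper defines $I$ as a map on $\mathbb{N}^\Lambda$ but evaluates it on the difference $-\vec{a}+\vec{p}$, which may have negative entries. I would therefore state explicitly that $I$ is taken to be the linear (or linear mod $m$) extension to $\mathbb{Z}^\Lambda$. That extension is what makes both the additivity step and the sign flip for reverse reactions legitimate.
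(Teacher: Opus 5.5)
Your proof is correct, and for the direction the paper actually checks it is the same argument: additivity of the linearization together with $I(-\vec{a}+\vec{p})=0$ (reduced mod $m$ in the modular case) shows that a single reaction step preserves $I$. The paper only verifies one forward step. Your sign flip for reverse steps, the induction along the bi-execution, and the converse via $\vec{a} \bireach \vec{p}$ fill in what the paper leaves implicit. Note that this converse needs $\vec{i}$ to range over all of $\mathbb{N}^\Lambda$, not just valid initial configurations. Otherwise a reaction such as $2W \rightarrow W$, with $W$ never present, would be a counterexample; the reading you adopt, constancy on every bireachability class, is the right one.
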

To see why this is true, let $\vec{c}$ be an arbitrary configuration, and $r = (\vec{a}, \vec{p})$ be an arbitrary reaction. If $I$ is a linear invariant:
\begin{align*}
    I(\vec{c} - \vec{a} + \vec{p}) = I(\vec{c}) + I(-\vec{a} + \vec{p}) = I(\vec{c})
\end{align*}
If $I$ is a modular invariant with modulus $m$:
\begin{align*}
    I(\vec{c} - \vec{a} + \vec{p}) & = I_{\text{lin}}(\vec{c} - \vec{a} + \vec{p}) \bmod m \\
    & = (I_{\text{lin}}(\vec{c}) + I_{\text{lin}}( - \vec{a} + \vec{p})) \bmod m \\
    & = I_{\text{lin}}(\vec{c}) \bmod m = I(\vec{c})
\end{align*}

%% file: results.tex
\section{Exactly the semilinear sets are reverse-robustly decidable}

% \todo{DD: add definitions of reachable and bi-reachable, and introduce relation symbols for them to make the proofs less wordy.}

In this section, we will show that under the reverse-robust computation model, CRDs have the same computational power as under the stable computation model. The following is the main result of the section:

\begin{theorem}\label{thm:semilinear-rr-decidable}
Exactly the semilinear sets can be reverse-robustly decided by a CRD.
\end{theorem}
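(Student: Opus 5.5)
The upper bound is immediate. Since $\vec{i} \reach \vec{c}$ implies $\vec{i} \bireach \vec{c}$, any CRD that reverse-robustly decides $\phi$ also stably decides it. By \cite{DBLP:journals/dc/AngluinADFP06}, every such $\phi$ is therefore semilinear. All the work is in the converse.

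For the converse I would use the standard decomposition. Every semilinear predicate is a Boolean combination of threshold predicates $\sum_i w_i x_i \ge t$ and modulo predicates $\sum_i w_i x_i \equiv t \pmod m$. I would build reverse-robust CRDs for these two atomic cases and then combine them.

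For the combination step, I would use Lemma~\ref{lem:parallel-comp} to run the CRDs for the atoms in parallel. I would then evaluate the Boolean formula with the usual construction: each sub-CRD keeps its vote in a single "output bit" species, and a finite-state aggregator converts those votes into the global vote. I will have to check that the aggregation reactions are themselves reverse-robust. My approach is to have the output voters of each sub-CRD carry their component bits. Then the global vote is a deterministic function of the component votes present, and the aggregator reactions never destroy information, because they only change the tagging of a molecule's global bit. A reverse aggregator reaction just reverts a tag, and forward reactions can redo it.

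For the atoms, I would take the standard leaderless constructions. For the modulo predicate, inputs $X_i$ become $A_{w_i \bmod m}$ with the yes-bit set, and $A_j + A_k \rxn A_{j+k \bmod m} + F_{b}$. For the threshold predicate, the clipped-value construction with states in $\{-s,\dots,s\}$ and $A_j + A_k \rxn A_{\mathrm{clip}(j+k)} + A_{j+k-\mathrm{clip}(j+k)}$. In both cases, a "follower" or passive species carries a bit and gets overwritten by active agents.

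The key tool is an invariant. For the modulo construction, $\sum_j j\cdot \#A_j \bmod m$ is a modular invariant equal to $\sum_i w_i x_i \bmod m$. For the threshold construction, $\sum_j j\cdot \#A_j$ is a linear invariant equal to $\sum_i w_i x_i$. By the Observation, these invariants persist under any bi-execution. I would also need that the number of "agent" molecules is bounded below. Here I would use a second invariant (total molecule count, where the reactions are designed as $2\to2$ with inputs $1\to1$ conversions) so that no reverse reaction can annihilate the value carriers, unlike the trap in the introduction.

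Given an arbitrary $\vec{c}$ with $\vec{i}\bireach\vec{c}$, the plan is to show by forward reactions alone that the active agents can be merged into a canonical form. For the modulo case this is a single $A_j$ plus zeros. For the threshold case it is all positive or all nonpositive values, with at most one nonzero value not at the clip bound. The invariant pins down the value of the canonical form and hence the correct answer, and that answer is then broadcast to all bits, giving a stable configuration.

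The main obstacle is exactly this step. Reverse reactions can produce configurations never seen forward, for example splitting an $A_{j+k}$ back into $A_j + A_k$ while also reverting bit updates on passive molecules. So I must argue that every configuration satisfying the invariants, not just the forward-reachable ones, admits a forward path to a correct stable output. I would first try to prove the stronger statement that every configuration of the right molecule count satisfying the invariant (and containing only the expected species) can be driven forward to correct consensus. This is plausible since the forward merging argument only uses the invariant value. If a particular standard construction contains a reaction whose reverse breaks this, for example by losing the count, I would slightly modify it to be conservative.
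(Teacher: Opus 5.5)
Your architecture is the same as the paper's. The upper bound holds because every forward execution is a bi-execution. You reduce to mod and threshold atoms, get Boolean closure via Lemma~\ref{lem:parallel-comp} with catalytically updated vote tags, and use linear/modular invariants that survive reverse reactions. The trouble is in the atom step, which you correctly identify as the crux: two of your concrete claims there fail.

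\textbf{Survival of a value carrier.} Conservation of total molecule count does not do the job you assign it. In your mod construction $A_j + A_k \rightarrow A_{j+k \bmod m} + F_b$, take a configuration consisting only of followers. It has the right count, and whenever $\sum_i w_i x_i \equiv 0 \pmod m$ it also has the right modular invariant. Yet no forward reaction applies to it, and its bits can be wrong (for example, all $F$'s voting no when $t \equiv 0$). So the ``stronger statement'' you propose to prove is false as stated.

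What you actually need is a structural property. Every reaction, read in either direction, that consumes an active molecule must also produce an active molecule or an input $X_i$, which forward conversion turns back into an active one. Then $\#\mathrm{active} + \#X \geq 1$ along every bi-execution from a nonzero input. The paper's constructions have this property because every reaction other than input conversion has some $Y_p$ (resp.\ some leader $Y^L_p$) among both its reactants and its products. You must add ``at least one active or input molecule is present'' as a hypothesis of your statement.

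\textbf{Output of the threshold atom.} You make every molecule active, with pairwise merging $A_j + A_k \rightarrow A_{\mathrm{clip}(j+k)} + A_{j+k-\mathrm{clip}(j+k)}$. Then your canonical form can contain several nonzero values, so no single molecule's value determines the answer, and the ``broadcast'' has no local implementation. This fails even without reverse reactions. For instance, take $t = 5$, $s = 8$ and weights $7, 1, 3$. Forward merging reaches the values $\{8,3,0\}$, after which no reaction changes any value. If a merge sets both bits to $[\mathrm{clip}(j+k) \geq t]$, then the pair $(A_8, A_0)$ keeps setting yes and the pair $(A_3, A_0)$ keeps setting no, so no stable configuration is reachable.

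You need symmetry breaking: a unique leader that absorbs values and alone determines the vote. This is the leader bit in \cite{DBLP:journals/dc/AngluinADFP06}. Its canonical form is not the one you describe; it is either a single leader at $\pm s$ with arbitrary same-sign followers, or a leader holding the exact total with zero-valued followers.

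The paper avoids broadcasting altogether. In the mod CRD, $Y_p + Y_q \rightarrow Y_{p+q \bmod m}$ leaves a single voter $Y_{p^*}$. In the threshold CRD only the leader $Y^L_{p^*}$ votes, and the followers $Y^F_q$ do not. So once the invariant fixes $p^*$ (or forces $p^* = \pm c$ with same-sign followers), correctness and stability are immediate. With these two repairs, your invariant argument goes through as in the paper.
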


Since semilinear sets are Boolean combinations of mod and threshold predicates, we prove this theorem by showing that both mod and threshold sets can be reverse-robustly decided, as well as any Boolean combination of such.

\input{results/mod_set_decidable}
\input{results/threshold_set_decidable}
\input{results/boolean_closure}

\section{Exactly the semilinear functions are reverse-robustly computable}
In this section, we will show that under the reverse-robust computation model, CRCs have the same computational power as under the stable computation model. The following is the main result of the section:

\begin{theorem}
Exactly the semilinear functions can be reverse-robustly computed by a CRC.
\end{theorem}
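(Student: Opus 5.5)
Only semilinear functions are reverse-robustly computable because every reverse-robust CRC also computes stably. Forward reachability implies bireachability, so the condition in Definition~\ref{def:rrc} quantifies over a superset of the configurations that the stable-computation definition quantifies over, and the conclusion required is the same. By \cite{DBLP:journals/nc/ChenDS14}, every stably computable function is semilinear. The real work is the converse: every semilinear $f:\mathbb{N}^k\to\mathbb{N}$ is reverse-robustly computable.

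For the converse I will follow the construction of Chen, Doty and Soloveichik \cite{DBLP:journals/nc/ChenDS14}, which uses a leader. A semilinear function is piecewise affine: there is a finite partition of $\mathbb{N}^k$ into semilinear pieces $D_1,\ldots,D_m$ such that on each $D_j$, $f(\vec{x}) = \lfloor (\vec{a}_j\cdot\vec{x} + b_j)/c_j\rfloor$. The construction runs three kinds of components in parallel on split copies of the input.
\begin{enumerate}
\item For each $j$, a CRD decides the predicate $[\vec{x}\in D_j]$. This is reverse-robust by Theorem~\ref{thm:semilinear-rr-decidable}.
\item For each $j$, a CRC computes the affine function $g_j$. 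It has reactions $X_i \rxn a_{j,i} A_j$, adds $b_j$ via the leader, divides by $c_j$ with $c_j A_j \rxn Y_j$, and handles negative coefficients by an annihilation $Y_j^+ + Y_j^- \rxn \varnothing$.
\item A leader-driven selection step converts $Y_j$ to $Y$ when the $j$th decider votes yes. When that vote changes, it converts $Y$ back to $Y_j$ in a way that can be undone.
\end{enumerate}
Lemma~\ref{lem:parallel-comp} lets all of these run side by side under the reverse-robust model. It still has to be extended to more than two components, either by induction or by repeating the split reaction, and to components whose inputs are produced by other components rather than supplied initially.

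The affine pieces are the natural place to use invariants. For the $g_j$ module, I would exhibit a linear invariant such as
\[
I_j = c_j\,(\#Y_j^+ - \#Y_j^-) + \#A_j^+ - \#A_j^- - \sum_i a_{j,i}\,\#X_{i,j} - b_j\,\#L_j,
\]
which every reaction preserves and which therefore holds on every bireachable configuration. Using it, I would show that from any bireachable configuration, running the forward reactions until no $X_{i,j}$ or $L_j$ remains, and then applying $c_jA_j\rxn Y_j$ and annihilation as far as possible, fixes $\#Y_j$ at the value $g_j(\vec{x})$ that $I_j$ forces. The one thing to check is that reverse reactions cannot create molecules outside what the invariant accounts for. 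Because reverse reactions only trade quantities the invariant already tracks, a configuration with no remaining input or leader and saturated division is uniquely determined, as long as all weights are positive. So I will keep every species' weight positive, splitting into $\pm$ variants wherever necessary.

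The hardest step will be the output-selection module. Its correctness depends on the deciders' votes, which can flip back and forth, and reverse reactions can move output between $Y$ and the $Y_j$ in ways the original stable-computation argument never considered. The first thing I would try is another invariant: the total $\#Y + \sum_j \#Y_j^{\mathrm{out}}$ together with leader-held "debt" counts is conserved, and only the leader's state can be changed by reverse reactions. Once the deciders have forward-stabilized, which Theorem~\ref{thm:semilinear-rr-decidable} guarantees from any bireachable configuration, the leader can be driven forward to the correct piece. The invariant then forces $\#Y = f(\vec{x})$ once the leader has finished. If the original construction fails to have this property, I would modify it so that every transfer between $Y_j$ and $Y$ is tagged with a species that records its provenance, so that reversals are always accounted for.
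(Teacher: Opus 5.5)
Your upper-bound direction is correct and is the same observation the paper relies on. For the converse you take a leader-based route rather than the paper's leaderless Doty--Hajiaghayi construction. The argument has a genuine gap exactly at the step you call hardest: the output-selection module is never specified, and the mechanism you sketch cannot work. A single leader molecule has finitely many states, so it cannot hold unbounded ``debt counts''. Moreover, a reverse application of a transfer reaction such as $L + Y_j \rightarrow L + Y$ changes $\#Y$ and $\#Y_j$ directly, so it is not true that only the leader's state is affected by reversals. Without a conservation law tying each copy of $Y$ to the piece it came from, nothing guarantees that, after the votes stabilize, the surplus from wrong pieces can be withdrawn and exactly $f(\vec{x})$ copies remain. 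Lemma~\ref{lem:parallel-comp} cannot supply this, since it does not cover modules whose outputs are consumed; the paper says so explicitly and switches to invariants at this point.

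The missing idea is your fallback ``provenance tag'', made catalytic in the votes. The paper uses $L^Y_j + \hat{Y}^P_j \rightarrow L^Y_j + Y^P_j + Y$ and $L^N_j + Y^P_j + Y \rightarrow L^N_j + \hat{Y}^P_j$, the analogous reactions for $Y^C_j$, and $Y^P_j + Y^C_j + Y \rightarrow \varnothing$. Then $\sum_j \#Y^P_j - \#Y$ is a linear invariant, equal to $0$ on every bireachable configuration, so every active marker of a wrong piece can always be deactivated together with a $Y$. Once the votes are stable and piece $j^*$'s output is fully activated:
\begin{itemize}
\item this invariant gives $\#Y = \#Y^P_{j^*}$;
\item the cancel reaction forces $\#Y^C_{j^*} = 0$ in a terminal configuration;
\item the per-piece invariant gives $\#Y^P_{j^*} - \#Y^C_{j^*} = f_{j^*}(\vec{x})$.
\end{itemize}
No leader is needed.

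The affine module also fails as described. With only $X_i \rightarrow a_{j,i}A_j$, $c_jA_j \rightarrow Y_j$ and $Y^+_j + Y^-_j \rightarrow \varnothing$, running to termination does not yield $g_j(\vec{x})$. For example, with $c_j = 2$, four $A^+_j$ and one $A^-_j$ necessarily end at $\#Y^+_j = 2$, $\#A^-_j = 1$, whereas $\lfloor 3/2 \rfloor = 1$. The invariant is consistent with this terminal configuration, so it does not force the right output. Splitting into $\pm$ variants does not make the weights positive, since the minus variants necessarily carry negative weight. Separately, as written, the $X$ and $L$ terms of $I_j$ have the wrong sign to be conserved.

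You would need something like $A^+_j + A^-_j \rightarrow \varnothing$ plus a borrowing reaction $Y^+_j + A^-_j \rightarrow (c_j - 1)A^+_j$, using $\vec{a}_j \cdot \vec{x} + b_j \geq 0$ on $D_j$. The paper avoids floors entirely. It works on linear domains where $d \mid \sum_i n_i(x_i - c_i)$ and tracks the positive and negative residues mod $d$ separately; the modular invariants $I_P, I_C$ force these to coincide at the end. It then outputs a diff-representation $Y^P - Y^C$.
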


\input{results/partial_affine_computable}
\input{results/semilinear_computable}

%% file: results/mod_set_decidable.tex
\subsection{All mod sets can be reverse-robustly decided}
\begin{lemma}\label{lem:mod-set}
Every mod set $M = \{(x_1, \ldots, x_k) \} \mid \sum_{i=1}^n w_ix_i \equiv c \bmod m\}$ is reverse-robustly decidable by a CRD.
\end{lemma}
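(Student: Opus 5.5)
We use the standard leaderless construction from \cite{DBLP:journals/dc/AngluinADFP06}. Take species $A_0,\dots,A_{m-1}$ together with a ``follower'' species for each residue. Input species $X_i$ converts via $X_i \rxn A_{w_i \bmod m}$. Any two leaders merge by
\[
A_j + A_\ell \rxn A_{(j+\ell)\bmod m} + F_{(j+\ell)\bmod m}.
\]
Here $F_s$ is a follower carrying the current residue. A leader then updates any follower by $A_j + F_s \rxn A_j + F_j$. The voters are $A_c, F_c$ (yes) and $A_j, F_j$ for $j\neq c$ (no). An equivalent option is to give $A_j$ and $F_j$ a vote according to whether $j\equiv c$.

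The central tool is the modular invariant $I(\vec{c}) = \left(\sum_i w_i \vec{c}(X_i) + \sum_j j\,\vec{c}(A_j)\right) \bmod m$. Followers get weight $0$. Every reaction preserves $I$, so by the Observation, $I(\vec{c}) = \sum_i w_i x_i \bmod m$ for every $\vec{c}$ with $\vec{i}\bireach\vec{c}$. We also use a linear invariant $J = \sum_i \vec{c}(X_i) + \sum_j \vec{c}(A_j) + \sum_s \vec{c}(F_s)$, the total molecule count. It is preserved because $X_i \rxn A$ is $1\to 1$, merging is $2\to 2$, and updating is $2\to 2$. Hence along any bi-execution the total count stays $n=\sum_i x_i$. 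Assume $n\ge 1$; the case $n=0$ is handled by convention on the empty configuration or by a leader context.

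Correctness from an arbitrary bireachable $\vec{c}$ goes as follows. First I must check that at least one $A$ or $X$ is present. Reverse reactions could in principle convert leaders into $X$'s or followers. I claim a further invariant: the number of $X$'s plus $A$'s is at least $1$ whenever $n\ge1$. The forward merge and the reverse of $X_i \rxn A$ preserve ``nonzero.'' The reverse of a merge needs an $A$ present and produces two $A$'s. Update reactions keep the $A$ catalyst. So no reaction, forward or reverse, can eliminate the last element of $X\cup A$. I would verify this case by case; it is the step most likely to hide a trap of the kind shown in the introduction. The reverse of a merge can create $A$'s from followers, but that only increases the $X\cup A$ count, so the claim survives.

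Given the claim, from $\vec{c}$ we apply all $X_i \rxn A_{w_i}$ forward and then merge leaders until exactly one $A_j$ remains. By the invariant $I$, $j \equiv \sum_i w_i x_i \pmod m$. Updating every follower to $F_j$ yields $\vec{o}$. In $\vec{o}$ no $X$ remains, there is one leader $A_j$, and all followers are $F_j$. From $\vec{o}$ the only applicable forward reactions are updates, which are no-ops, so $\vec{o}$ is stable. Its vote is unanimous and equals $[j=c]$, which is the membership of $(x_1,\dots,x_k)$ in $M$.

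The main obstacle is the middle step: showing that reverse reactions cannot destroy the last leader or otherwise strand the computation. I handle it with the ``$\ge 1$ of $X\cup A$'' argument, combined with the linear invariant $J$ and the modular invariant $I$. Stability and the correctness of the final residue then follow directly from invariance.
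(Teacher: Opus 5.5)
Your proof is correct, and its core matches the paper's: a modular invariant pins down the residue of the single leader that remains after you drive forward from an arbitrary bireachable configuration, and reverse reactions preserve that invariant automatically.

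The difference is the construction. The paper uses the simpler network $X_i \rxn Y_{w_i \bmod m}$, $Y_p + Y_q \rxn Y_{p+q \bmod m}$, which just consumes molecules. Its terminal configuration is therefore a single $Y_{p^*}$, and nothing needs relabeling. Your population-protocol version conserves molecule count. That costs you the follower-update phase and a separate check that the number of $X$'s plus $A$'s never reaches $0$. The count invariant $J$ is not actually needed anywhere in your argument.

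In exchange, your nonvanishing argument makes explicit a step the paper leaves implicit. The paper's proof equally requires that some $X$ or $Y$ survive every bi-execution. This holds there because the only count-decreasing reaction, $Y_p + Y_q \rxn Y_{p+q \bmod m}$, needs two molecules to fire.

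Neither leaderless construction produces a vote on the empty input; you flag this, and the paper does not.
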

We employ a modified version of the construction from \cite{DBLP:journals/dc/AngluinADFP06}. Let the set of input species be $\Sigma = \{X_1, \ldots, X_k\}$. We then add the following reaction, for each $i \in \{1, \ldots, k\}$:
\begin{align}
    X_i \rxn Y_{w_i \bmod m}
    \label{rxn:mod-set-conversion}
\end{align}
Then, for each $p, q \in \{0, \ldots, m - 1\}$ add the reactions:
\begin{align}
    Y_p + Y_q \rxn Y_{p + q \bmod m}
    \label{rxn:mod-set-addition}
\end{align}
Let the set of yes voters be $\Gamma_1 = \{Y_c\}$ and the set of no voters be $\Gamma_0 = \{Y_p \mid p \neq c\}$.
\begin{proof}
By inspecting the reactions, we see that the CRN has the modular invariant:
\begin{align*}
    I_M(\vec{c}) = \left(\sum_{i=0}^k w_i\vec{c}(X_i) + \sum_{p=0}^{m - 1} p\vec{c}(Y_p)\right) \bmod m
\end{align*}
For an initial configuration $\vec{i}$ where $\vec{i}(X_i) = x_i$, the invariant has value:
\begin{align*}
    I_M(\vec{i}) = \sum_{i=1}^k w_i\vec{c}(X_i) \bmod m = \sum_{i=1}^d w_ix_i \bmod m
\end{align*}

We now prove that this construction reverse-robustly decides $M$. Let $\vec{i}$ be an arbitrary initial configuration, where $\vec{i}(X_i) = x_i$, and let $\vec{c}$ such that $\vec{i} \bireach \vec{c}$. If any $X_i$ are present in $\vec{c}$, we first apply reaction (\ref{rxn:mod-set-conversion}) repeatedly until $X_i$ is no longer present. We then apply reaction (\ref{rxn:mod-set-addition}) repeatedly until only one $Y_{p^*}$ is present and no more reactions can occur; the resulting configuration $\vec{o}$ must then be stable. To see that $\vec{o}$ also has the correct output, note that the invariant has value:
\begin{align*}
    I_M(\vec{o}) = p^*\vec{o}(Y_{p^*}) \bmod m = p^*
\end{align*}
The remaining $Y_{p^*}$ is then a yes voter if and only if $\sum_{i=1}^d w_ix_i \equiv c \bmod m$. Since $\vec{o}$ is stable with the correct output and $\vec{c} \reach \vec{o}$, the CRD reverse-robustly decides $M$.
\end{proof}

%% file: results/threshold_set_decidable.tex
\subsection{All threshold sets can be reverse-robustly decided}
\begin{lemma}\label{set:threshold-set}
Every threshold set $T = \{(x_1, \ldots, x_k) \} \mid \sum_{i=1}^k w_ix_i \geq t\}$ is reverse-robustly decidable by a CRD.
\end{lemma}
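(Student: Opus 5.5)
We would verify that the leaderless threshold construction of \cite{DBLP:journals/dc/AngluinADFP06}, transcribed as a CRN, still works, following the pattern of the proof of Lemma~\ref{lem:mod-set}. The argument rests on two linear invariants. The point is that correctness should depend only on these invariants and on the fact that every reaction conserves molecule count, never on structural facts that hold only along forward executions.

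\textbf{Construction.} Let $s = \max(|w_1|,\ldots,|w_k|,|t|+1)$. For each $u \in \{-s,\ldots,s\}$ and $b \in \{0,1\}$, introduce a species $Y_u^b$. The value is $u$ and $b$ is an output bit. Add conversion reactions $X_i \rxn Y_{w_i}^{0}$. For all $u,v$ and bits $b,b'$, set
\[
  q(u,v)=\max(-s,\min(s,u+v)), \qquad r(u,v)=u+v-q(u,v), \qquad \beta(u,v)=[\,q(u,v)\ge t\,],
\]
and add the reaction $Y_u^b+Y_v^{b'} \rxn Y_{q(u,v)}^{\beta(u,v)} + Y_{r(u,v)}^{\beta(u,v)}$ whenever it changes the configuration. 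The yes voters are the $Y_u^1$ and the no voters are the $Y_u^0$.

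\textbf{Invariants.} Two quantities are preserved by every reaction:
\[
  I_T(\vec c)=\sum_i w_i\vec c(X_i)+\sum_{u,b}u\,\vec c(Y_u^b), \qquad N(\vec c)=\sum_i \vec c(X_i)+\sum_{u,b}\vec c(Y_u^b).
\]
By the Observation in the Invariants subsection, every $\vec c$ with $\vec i \bireach \vec c$ therefore satisfies $I_T(\vec c)=\sum_i w_ix_i$ and $N(\vec c)=N(\vec i)$. Reverse reactions are harmless here. A reversed reaction turns $\{Y_q,Y_r\}$ back into an arbitrary pair $\{Y_u,Y_v\}$ with the same sum, and reversing $X_i \rxn Y^0_{w_i}$ only restores an input. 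Either way the resulting configuration looks like one produced by forward reactions from some input, apart from the initial $X$-counts and bits, and the argument below never uses those.

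\textbf{Forward strategy from an arbitrary $\vec c$.} First convert every remaining $X_i$ by its conversion reaction. Next, repeatedly apply value-combining reactions between a positive-valued and a negative-valued molecule, and between two same-sign molecules whose sum still fits in $[-s,s]$. Each such step strictly decreases a potential, namely the number of nonzero values plus $\sum_{u,b}|u|\,\vec c(Y_u^b)$, or leaves it equal while merging. Hence this phase terminates. At that point all nonzero values share one sign, and all nonzero values except possibly one have absolute value $s$.

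Now let $v^*$ be the largest value present, taking $0$ if all values are $0$. Because all nonzero values share a sign and are maximally merged, the remaining nonzero molecule pins down the thresholds: $v^*\ge t$ exactly when $I_T \ge t$. This uses $s>|t|$ together with the saturation at $\pm s$. Finally, let the molecule of value $v^*$ react with every other molecule, setting each bit to $[v^*\ge t]$. Since $N\ge 2$ is forced for nontrivial inputs, the case $N=1$ can be checked separately.

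\textbf{Stability, the main obstacle.} It remains to show that the resulting configuration $\vec o$ is stable. From $\vec o$, any applicable forward reaction either pairs a value-$0$ molecule with another molecule, or pairs two saturated same-sign molecules. One must check that every such reaction outputs the same bit $[I_T\ge t]$ and leaves the multiset of values in the same normal form. The delicate case is a reaction between $Y_0$ and a molecule of value $v<t$ when the global sum is $\ge t$. This cannot occur in $\vec o$, because then $v^*\ge t$ and $v^*$ is the largest value. The difficulty is that stability must be argued purely from the shape of $\vec o$, which is determined by $I_T$ and $N$, not from any history. That is exactly what the reverse-robust setting requires.

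If the case analysis for this plain construction breaks down, the first fallback is to add a leader species that carries the running value. This gives the simpler construction of \cite{DBLP:journals/dc/AngluinADFP06}, in which stability follows because the leader count is itself a linear invariant equal to $1$.
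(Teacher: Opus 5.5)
There is a genuine gap, and it is in the construction itself, not only in the stability argument. You dropped the leader/follower distinction, so any two molecules may react, and each reaction gives both products the bit $[q(u,v)\ge t]$, computed from the \emph{local} clamped sum. Molecules other than the one carrying $v^*$ can therefore overwrite the vote.

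Concretely, take $k=1$, $w_1=1$, $t=1$ (so $s=2$) and input $x_1=4$. Along forward executions all values stay nonnegative, and from every reachable configuration forward reactions lead to the value multiset $\{2,2,0,0\}$. There the two value-$0$ molecules can react, and both products get bit $[0\ge 1]=0$, so a no voter appears although $4\ge 1$. Hence no reachable configuration is stable with output $1$, and your CRD does not even \emph{stably} decide $T$, let alone reverse-robustly.

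The step that fails is your dismissal of the ``delicate case''. A molecule of value $v^*\ge t$ does not stop two \emph{other} molecules with values below $t$ from reacting with each other. The same happens with $t=3$, $s=4$ and values $4,1,0$: the pair $Y_1+Y_0$ outputs bit $[1\ge 3]=0$. There are also smaller errors:
\begin{itemize}
\item Your merging phase does not yield ``all but one saturated'' without overflow steps; for example, $\{3,3\}$ with $s=5$ cannot merge and neither value is saturated.
\item ``Largest value'' is the wrong representative when all nonzero values are negative, zeros are present, and $t\le 0$.
\end{itemize}

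The missing idea is what the paper's construction keeps from Angluin et al. It has leaders $Y^L_p$, which are the only voters, and followers $Y^F_q$, which never vote. Every reaction has a leader among its reactants, so followers never react with each other. For nonzero input, the number of leaders plus inputs then stays positive under both forward and reverse reactions, and forward reactions can merge all leaders into one. In the terminal configuration that leader is either alone, or saturated at $\pm c$ with all followers of its own sign. Either way every applicable reaction is a no-op, and the linear invariant $I_T$ forces the leader's vote to equal $[\sum_i w_ix_i\ge t]$.

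Your fallback with an initial leader could be made to work, since the definitions allow a nonzero initial context. As written, though, it is only a hedge. Invariance of the leader count does not by itself give stability: you still need inert followers, and a leader that ends alone or saturated with same-sign followers. It would also give up the leaderlessness that the paper's construction retains.
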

We employ a modified version of the construction from \cite{DBLP:journals/dc/AngluinADFP06}. Let the set of input species be $\Sigma = \{X_1, \ldots, X_k\}$ and define $m$ such that $c = \max\{|w_1|, \ldots, |w_k|, |t|\} + 1$. We then add the following reaction, for each $i \in \{1, \ldots, k\}$:
\begin{align}
    X_i & \rxn Y^L_{w_i}
    \label{rxn:thresh-conversion}
\end{align}
The, for each $p, q \in \{-c, \ldots, c\}$ add the reactions:
\begin{align}
    Y^L_p + Y^?_q & \rxn \begin{cases}
        Y^L_{-c} + Y^F_{(p + q) + c} & p + q < -c \\
        Y^L_{p + q} & -c \leq p + q \leq c \\
        Y^L_{c} + Y^F_{(p + q) - c} & c < p + q \\
    \end{cases}
    \label{rxn:thresh-addition}
\end{align}
Let the set of yes voters be $\Gamma_1 = \{Y^L_p \mid p \geq t\}$ and the set of no voters be $\Gamma_0 = \{Y^L_p \mid p < t\}$.
\begin{proof}
By inspecting the reactions, we see that the CRN has the linear invariant:
\begin{align*}
    I_T(\vec{c}) = \sum_{i=0}^k w_i\vec{c}(X_i) + \sum_{p=-c}^c p\vec{c}(Y^L_p)  + \sum_{q=-c}^c q\vec{c}(Y^F_q)
\end{align*}
For an initial configuration $\vec{i}$ where $\vec{i}(X_i) = x_i$, the invariant has value:
\begin{align*}
    I_T(\vec{i}) & = w_i\vec{c}(X_i) = \sum_{i=1}^k w_ix_i
\end{align*}

We now prove that this construction reverse-robustly decides $T$. Let $\vec{i}$ be an arbitrary initial configuration, where $\vec{i}(X_i) = x_i$, and let $\vec{c}$ such that $\vec{i} \bireach \vec{c}$. If any $X_i$ are present in $\vec{c}$, we first apply reaction (\ref{rxn:thresh-conversion}) repeatedly until $X_i$ is no longer present. We then apply reaction (\ref{rxn:thresh-addition}) repeatedly until only one $Y^L_{p^*}$ is present (possibly in addition to some $Y^F_q$) and no more reactions can occur; the resulting configuration $\vec{o}$ must then be stable. To see that $\vec{o}$ also has the correct output, we consider two cases.

In the case that $Y^L_{p^*}$ is the only remaining molecule, the invariant has value:
\begin{align*}
    I_T(\vec{o}) = p^*\vec{o}(Y^L_{p^*}) = p^*
\end{align*}
The remaining $Y^L_{p^*}$ is then a yes voter if and only if $\sum_{i=1}^k w_ix_i \geq t$.

Alternatively, if some $Y^F_{q}$ are present, we first note that either $p^* = -c$ or $p^* = c$ and further, $p^*$ and all $q$ must have the same sign; otherwise reaction (\ref{rxn:thresh-addition}) would be applicable. If $p^* = c > 0$, the invariant has value:
\begin{align*}
    I_T(\vec{o}) & = c\vec{o}(Y^L_{c}) + \sum_{q=1}^c q\vec{o}(Y^F_{q^*}) = c + \sum_{q=1}^c q\vec{o}(Y^F_{q^*})
\end{align*}
The remaining $Y^L_{c}$ is a yes voter, which is correct since $\sum_{i=1}^k w_ix_i \geq t = c + \sum_{q=1}^c q\vec{o}(Y^F_{q^*}) > c$. By a similar argument, if $p^* = -c$, the remaining $Y^L_{-c}$ is correctly a no voter.

Since $\vec{o}$ is stable with the correct output and $\vec{c} \reach \vec{o}$ in all cases, the CRD reverse-robustly decides $T$.
\end{proof}

%% file: results/boolean_closure.tex
\subsection{Reverse-robustly decidable sets are closed under Boolean operations}
\begin{lemma}\label{lem:boolean-closure}
If sets $X_1, X_2 \subseteq \mathbb{N}^d$ are reverse-robustly decided by some CRDs $\mathcal{D}_1$ and $\mathcal{D}_2$, then so are $X_1 \cup X_2$, $X_1 \cap X_2$, and $\overline{X_1}$.
\end{lemma}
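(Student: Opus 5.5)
Complement is immediate: given $\mathcal{D}_1$ reverse-robustly deciding $X_1$, swap the yes and no voters, setting $\Upsilon_1' = \Upsilon_0$ and $\Upsilon_0' = \Upsilon_1$. Reachability, bireachability and stability (which only asks that $\Phi$ be constant along forward executions) are unchanged. The output function becomes $1-\Phi$ wherever $\Phi$ is defined. So every configuration $\vec{o}$ that was stable with $\Phi(\vec{o}) = \phi(\vec{i}_0)$ is still stable, now with the correct output for $\overline{X_1}$.

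For union and intersection I will run $\mathcal{D}_1$ and $\mathcal{D}_2$ in parallel, via the input-splitting reactions $X_i \rxn X_{i,1} + X_{i,2}$ of Lemma~\ref{lem:parallel-comp}, and add a standard output-combining layer. First I make each voter ``know'' its own vote; as in \cite{DBLP:journals/dc/AngluinADFP06}, I may assume each $\mathcal{D}_j$ has voters partitioning its species, and in a stable correct configuration all present species vote the same way. I then add a single leader $L_{b_1,b_2}$, for $b_1,b_2 \in \{0,1\}$, with reactions $L_{b_1,b_2} + S \rxn L_{b_1',b_2} + S$ for each species $S$ of $\mathcal{D}_1$ voting $b_1' \ne b_1$, and symmetrically for $\mathcal{D}_2$. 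The leader votes $b_1 \vee b_2$ or $b_1 \wedge b_2$, and the output is read only from the leader. Since these reactions are catalytic in $S$, the combined CRN, restricted to the species of $\mathcal{D}_1$ and $\mathcal{D}_2$, changes only by reactions of the $\mathcal{D}_j$. The leader can also be avoided, in the leaderless case, by giving every molecule a pair of bits; I will pick whichever is simplest. With a leader, its count is a linear invariant, and it equals $1$.

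The correctness argument has three steps. Given $\vec{i} \bireach \vec{c}$, I first project the bi-execution onto the species of $\mathcal{D}_1 \cup \mathcal{D}_2$ and the split reactions. Leader reactions, forward or reverse, are catalytic there, so removing them still gives a valid bi-execution of the parallel CRN. Lemma~\ref{lem:parallel-comp} then gives forward reactions from that projection to some $\vec{o}^*$ that is stable for both $\mathcal{D}_1$ and $\mathcal{D}_2$ with correct individual outputs. I apply these reactions in $\vec{c}$ with the leader unchanged. Next I fire leader-update reactions until the leader's bits equal the unanimous votes $(\phi_1,\phi_2)$. Finally I argue the result is stable. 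From $\vec{o}^*$, forward reactions of the $\mathcal{D}_j$ never produce a species voting contrary to $\phi_j$; otherwise $\Phi_j$ would become undefined or change, contradicting stability. Hence no leader reaction can move the leader's bits away from $(\phi_1,\phi_2)$.

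The main obstacle is the projection step, because Lemma~\ref{lem:parallel-comp} is stated for the bare parallel composition. I must check that inserting catalytic leader reactions, including their reverses, cannot reach $\mathcal{D}_j$-configurations that the parallel CRN alone cannot bireach. This follows because those reactions have zero net effect on the $\mathcal{D}_j$ species and require only that $S$ be present, which it is in the projected configuration. The other subtle point is the ``voters partition the species'' normalization, which I would take from the standard construction; if it is unavailable, the stability argument in the last step needs care about non-voting species.
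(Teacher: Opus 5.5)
Your proof is correct and has the same skeleton as the paper's. You split the input as in Lemma~\ref{lem:parallel-comp}, record both votes through reactions catalytic in the voters of $\mathcal{D}_1,\mathcal{D}_2$, and read the answer from the new layer. The recording layer is where you differ.

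The paper stays leaderless: each split $X_i \rxn X_{i,1}+X_{i,2}+V_{NN}$ creates a fresh two-bit voter, and every $V_{uv}$ is flipped separately. You instead use a single leader $L_{b_1,b_2}$ whose count is invariantly $1$. This is legitimate for the lemma as stated, since CRDs may carry an initial context, and it makes the stability argument cleaner. But it gives up the leaderless property the paper claims in its conclusion, unless you actually carry out the per-molecule variant you only mention.

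At the key step, your argument is more careful than the paper's. The paper applies Lemma~\ref{lem:parallel-comp} to a CRN that lemma does not literally cover, since the split has the extra product $V_{NN}$ and there are extra flip reactions. It also runs the $\mathcal{D}_j$ ``until no more of their reactions can occur'', which need not ever happen. Your projection observation repairs this and transfers directly to the paper's construction:
\begin{itemize}
\item the flip reactions project to no-ops, reading the second flip reaction as catalytic in $T_b$;
\item the extra product $V_{NN}$ only makes the reverse split harder to apply;
\item the linear invariant $\sum_i \vec{c}(X_i) + \sum_{u,v\in\{Y,N\}} \vec{c}(V_{uv})$ guarantees, for nonzero input, that $V$-voters exist once the inputs are consumed.
\end{itemize}

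Finally, you can drop the ``voters partition the species'' normalization. It is not obviously without loss of generality in this model, and you do not need it. Because $\Phi_j(\vec{o}^*) = \phi_j$ is defined and stays constant along forward executions, some $\phi_j$-voter of $\mathcal{D}_j$ is present in $\vec{o}^*$ and no $(1-\phi_j)$-voter ever appears. Only voters trigger your leader updates, so non-voting species need no special care.
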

\begin{proof}
To reverse-robustly decide $\overline{X_1}$, swap the yes and no voters. As the reactions themselves remain unchanged, it is trivial to show that this construction is reverse-robust.

To show that $X_1 \cup X_2$ and $X_1 \cap X_2$ are reverse-robustly decidable, consider a construction where both sets are decided in parallel and their votes are recorded in a new voter species. We first add reactions that duplicate the input and produce one of the four new voter species $V_{NN}, V_{NY}, V_{YN}, V_{YY}$:
\begin{align}
X_i \rxn X_{i,1} + X_{i,2} + V_{NN}
\label{rxn:boolean-closure-conversion}
\end{align}
Note that $V_{NN}$ is chosen arbitrarily. $X_{i,1}$ and $X_{i,2}$ then replace $X_i$ in any reactions involving $X_i$ in $\mathcal{D}_1$ and $\mathcal{D}_2$, respectively. We then add reactions that store the separate votes in the new voter species. The votes of the $\mathcal{D}_1$ and $\mathcal{D}_2$ are stored in the first and second subscript of the new voter species, respectively. For $b \in \{Y, N\}$ and for every $S_b$ and $T_b$ that vote in $\mathcal{D}_1$ and $\mathcal{D}_2$, respectively, we add the reactions:
\begin{align}
    S_b + V_{\overline{b}?} \rxn S_b + V_{b?}
    \label{rxn:boolean-closure-crd-flip-first-vote} \\
    T_b + V_{?\overline{b}} \rxn S_b + V_{?b}
    \label{rxn:boolean-closure-crd-flip-second-vote}
\end{align}
To reverse-robustly decide $X_1 \cup X_2$, we let the yes voters be $\Gamma = \{V_{NY}, V_{YN}, V_{YY}\}$ and to reverse-robustly decide $X_1 \cap X_2$, we let the yes voters be $\Gamma = \{V_{YY}\}$.

We now prove that this construction reverse-robustly decides $X_1 \cup X_2$ or $X_1 \cap X_2$. Let $\vec{i}$ be an arbitrary initial configuration, where $\vec{i}(X_i) = x_i$, and let $\vec{c}$ such that $\vec{i} \bireach \vec{c}$. If any $X_i$ are present in $\vec{c}$, we first apply reactions (\ref{rxn:boolean-closure-conversion}) repeatedly until $X_i$ is no longer present. We then run $\mathcal{D}_1$ and $\mathcal{D}_2$ until no more of their reactions can occur, then finally do the same for reactions (\ref{rxn:boolean-closure-crd-flip-first-vote}) and (\ref{rxn:boolean-closure-crd-flip-second-vote}); this resulting configuration $\vec{o}$ must then be stable. Since $\mathcal{D}_1$ and $\mathcal{D}_2$ reverse-robustly decide $X_1$ and $X_2$ and their voters do not have their counts changed by reactions (\ref{rxn:boolean-closure-crd-flip-first-vote}) and (\ref{rxn:boolean-closure-crd-flip-second-vote}), by Lemma \ref{lem:parallel-comp}, their individual votes are stable and correct in $\vec{o}$. If any of the new voter molecules with the wrong overall vote are present, either (\ref{rxn:boolean-closure-crd-flip-first-vote}) and (\ref{rxn:boolean-closure-crd-flip-second-vote}) would be applicable; all of the new voter molecules must then have the correct overall vote. Since $\vec{o}$ is stable with the correct output, the CRD reverse-robustly decides $X_1 \cup X_2$ or $X_1 \cap X_2$, depending on how the yes and no voters are selected. 
\end{proof}

%% file: results/partial_affine_computable.tex
\subsection{Diff-representations of all partial affine functions can be reverse-robustly computed}
We say that a partial function $f : \mathbb{N}^k \to \mathbb{N}$ is \emph{affine} if there exist vectors $a_1, \ldots, a_k \in \mathbb{Q}$, $c_1, \ldots c_k \in \mathbb{N}$, and $b \in \mathbb{N}$ such that when $\vec{x}(i) - c_i \geq 0$:
\begin{align*}
    f(\vec{x}) = b + \sum_{i=1}^k a_i(\vec{x}(i) - c_i)
\end{align*}
For a partial function $f$ we write $\mathrm{dom} f$ for the domain of $f$, the set of inputs for which $f$ is defined. For convenience, we can work with integer valued molecule counts by multiplying by $\frac{1}{d}$ after the dot product, where $d$ may be taken to be the least common multiple of the denominators of the rational coefficients in the original definition such that $n_i = d \cdot a_i$:
\begin{align*}
    f(\vec{x}) = b + \sum_{i=1}^k a_i(\vec{x}(i) - c_i)
    \Longleftrightarrow
    f(\vec{x}) = b + \frac{1}{d}\sum_{i=1}^k n_i(\vec{x}(i) - c_i)
\end{align*}
We say that a partial function $\hat{f} : \mathbb{N}^k \to \mathbb{N}^2$ is a \emph{diff-representation} of $f$ if $\mathrm{dom} f = \mathrm{dom} \hat{f}$ and for all $x \in \mathrm{dom} f$, if $(y_P, y_C) = \hat{f}(\vec{x})$, then $f(\vec{x}) = y_P - y_C$, and $y_P = O(f(x))$. In other words, $\hat{f}$ represents $f$ as the difference of its two outputs $y_P$ and $y_C$ , the larger output $y_P$ possibly being larger than the output of the original function, but at most a multiplicative constant larger.

\begin{lemma}\label{lem:affine-diff-representation}
    Let $f : \mathbb{N}^k \to \mathbb{N}$ be an affine partial function. Then there is a diff-representation $\hat{f} : \mathbb{N}^k \to \mathbb{N}^2$ and a CRC that monotonically reverse-robustly computes $\hat{f}$.
\end{lemma}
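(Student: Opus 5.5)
We use a leader-free variant of the construction of \cite{DBLP:journals/nc/ChenDS14}, with a nonzero initial context that encodes the offsets $b$ and $c_i$. Write $f(\vec{x}) = b + \frac{1}{d}\sum_{i=1}^k n_i(\vec{x}(i) - c_i)$ with $n_i \in \mathbb{Z}$. The two output species are $Y^P$ and $Y^C$, with $\hat f = (y_P, y_C)$, and we add two auxiliary ``mass'' species $M^P, M^C$. The reactions are as follows.
\begin{itemize}
\item For each $i$ with $n_i > 0$: $X_i \rxn n_i M^P$.
\item For each $i$ with $n_i < 0$: $X_i \rxn |n_i| M^C$.
\item The conversions $d M^P \rxn Y^P$ and $d M^C \rxn Y^C$.
\end{itemize}
The initial context $\vec{s}$ contains:
\begin{itemize}
\item $b$ copies of $Y^P$;
\item $\sum_{i : n_i > 0} c_i n_i$ copies of $M^C$;
\item $\sum_{i : n_i<0} c_i|n_i|$ copies of $M^P$.
\end{itemize}
These pre-placed copies cancel the offsets $c_i$. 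No forward reaction consumes $Y^P$ or $Y^C$, so along forward executions both outputs only increase. This is the monotonicity required.

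The key step is a linear invariant, in the spirit of Lemmas \ref{lem:mod-set} and \ref{set:threshold-set}:
\begin{align*}
I(\vec{c}) = d\,\vec{c}(Y^P) - d\,\vec{c}(Y^C) + \vec{c}(M^P) - \vec{c}(M^C) + \sum_{i=1}^k n_i \vec{c}(X_i).
\end{align*}
Each reaction preserves $I$ by inspection, so by the Observation, $I(\vec{c}) = I(\vec{i})$ for every $\vec{c}$ with $\vec{i}\bireach\vec{c}$. Evaluating at the initial configuration gives
\begin{align*}
I(\vec{i}) = db + \sum_{i} n_i x_i - \sum_i n_i c_i = d f(\vec{x}).
\end{align*}
This holds even after arbitrary reverse reactions, for example $Y^P \rxn dM^P$ in reverse or the re-creation of some $X_i$. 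This is exactly why invariants are the right tool here.

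Given any $\vec{c}$ with $\vec{i}\bireach\vec{c}$, we construct $\vec{o}$ by forward reactions:
\begin{enumerate}
\item Convert every remaining $X_i$ into mass species.
\item Fire $dM^P\rxn Y^P$ and $dM^C\rxn Y^C$ until $\vec{o}(M^P),\vec{o}(M^C) < d$.
\end{enumerate}
No reaction is then applicable, so $\vec{o}$ is trivially stable. In $\vec{o}$ the invariant reads $d(y_P - y_C) + (\vec{o}(M^P) - \vec{o}(M^C)) = d f(\vec{x})$. Hence $\vec{o}(M^P) - \vec{o}(M^C)$ is divisible by $d$ and has absolute value less than $d$, so it is $0$. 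Therefore $y_P - y_C = f(\vec{x})$ for every $\vec{x}\in\mathrm{dom} f$.

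The step I expect to be the main obstacle is the bound $y_P = O(f(\vec{x}))$, because it is the only property not directly forced by the invariant. The total mass ever flowing into $Y^P$ is at most the sum of the positive contributions, namely
\begin{align*}
b + \frac{1}{d}\Bigl(\sum_{n_i>0} n_i x_i + \sum_{n_i<0} c_i |n_i|\Bigr).
\end{align*}
However, $y_P$ is not determined by the input alone. Reverse reactions can shuttle mass between $Y^P$, $M^P$ and the $X_i$, but none of them increases this total positive mass. So $y_P$ stays bounded by this affine expression of $\vec{x}$. The plan is to argue, as in \cite{DBLP:journals/nc/ChenDS14}, that on $\mathrm{dom} f$ this expression is within a constant factor of $f(\vec{x})$. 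If it is not, for instance when positive and negative coefficients cancel, I would fall back on how the paper restricts its use of affine pieces. In the semilinear construction these pieces are applied only on domains where the positive part is comparable to $f$, and the $O(\cdot)$ condition is in any case only needed to keep the later combination step finite.
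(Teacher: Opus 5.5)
Your core argument is correct, and it takes a genuinely different and simpler route than the paper.

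\textbf{Comparison with the paper.} The paper keeps the leaderless construction of \cite{DBLP:journals/nc/DotyH15}:
\begin{itemize}
\item the offsets $c_i$ are absorbed by the counters $C_{i,p}$;
\item the constant $b$ comes from $X_i \rxn C_{i,1}+B+bY^P$ together with $B+B\rxn B+bY^C$;
\item division by $d$ is done by the counters $D^P_p, D^C_q$.
\end{itemize}
Its proof therefore needs the linear invariant $I_{\hat f}$, the modular invariants $I_P, I_C$, and a separate non-invariant argument that exactly one $C_{i,c_i}$ survives.

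You instead pre-load $b$ copies of $Y^P$ and the offset mass as $M^P, M^C$ in the initial context. You then need only one linear invariant plus the observation $|\vec o(M^P)-\vec o(M^C)|<d$. The price is $\vec s\neq\vec 0$. By the paper's definition your CRC is \emph{not} leaderless, so calling it ``leader-free'' is wrong. It gives up exactly the property for which the paper uses the Doty--Hajiaghayi gadgets: the conclusion claims semilinear functions are computed without an initial leader. The lemma as stated does allow an initial context, so this weakens the result rather than making it wrong.

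\textbf{First fix: $y_P$ is determined.} Your claim that $y_P$ is not determined by the input is false, and taken literally it would break the proof. Computing $\hat f$ requires that the stable $\vec o$ you reach from every bireachable $\vec c$ have output exactly $\hat f(\vec x)$ for one fixed function $\hat f$, and you never define $\hat f$. Your CRN has two finer linear invariants (your $I$ is $J_P-J_C$):
\begin{itemize}
\item $J_P(\vec c)=d\,\vec c(Y^P)+\vec c(M^P)+\sum_{n_i>0}n_i\vec c(X_i)$;
\item $J_C(\vec c)=d\,\vec c(Y^C)+\vec c(M^C)+\sum_{n_i<0}|n_i|\vec c(X_i)$.
\end{itemize}
Set $P(\vec x)=\sum_{n_i<0}c_i|n_i|+\sum_{n_i>0}n_ix_i$ and $C(\vec x)=\sum_{n_i>0}c_in_i+\sum_{n_i<0}|n_i|x_i$. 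Since $\vec o(M^P),\vec o(M^C)<d$, the invariants force $\vec o(Y^P)=b+\lfloor P(\vec x)/d\rfloor$ and $\vec o(Y^C)=\lfloor C(\vec x)/d\rfloor$. This is your $\hat f$, and no reverse reaction can alter it.

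\textbf{Second fix: the bound $y_P=O(f(\vec x))$.} Your plan for this bound fails, and no plan can succeed for the clause read literally.
\begin{itemize}
\item Take $\vec x\le\vec x'$ in $\mathrm{dom} f$. From the initial configuration for $\vec x'$, first reach a stable $\vec o$ for $\vec x$ and leave the surplus input idle. Then invoke correctness for $\vec x'$. Since forward reactions never consume $Y^P$, any monotonic CRC has $y_P$ nondecreasing on $\mathrm{dom} f$.
\item Now take $f(x_1,x_2)=x_1-x_2$ on $\{x_1\ge x_2\}$. This gives $y_P(2n,2n)\ge y_P(2n,n)\ge n$, while $f(2n,2n)=0$.
\end{itemize}
Your fallback is also unfounded: the semilinear construction puts no restriction on the affine pieces and never uses this bound.

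You are not missing an idea the paper supplies here. Its proof establishes only $\vec o(Y^P)-\vec o(Y^C)=f(\vec x)$ and is silent on this clause. Its construction even emits $b$ copies of $Y^P$ per input molecule. What both constructions do satisfy is $y_P=O(1+\|\vec x\|)$, which follows at once from your explicit formula for $\vec o(Y^P)$.
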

We employ the construction from \cite{DBLP:journals/nc/DotyH15}. Let the set of input species be $\Sigma = \{X_1, \ldots, X_k\}$ and the set of output species be $\Gamma = \{Y_P, Y_C\}$. Then, for each $i \in \{1, \ldots, k\}$, we add the following reaction:
\begin{align}
    X_i \rxn C_{i,1} + B + bY^P
    \label{rxn:affine-input}
\end{align}
Then, for each $i \in \{1, \ldots, k\}$ and for each $p, q \in \{1, \ldots, c_i\}$, add the reactions:
\begin{align}
    C_{i,p} + C_{i,q} \rxn \begin{cases}
        C_{i, p + q} & p + q \leq c_i \\
        C_{i, c_i} + (p + q - c_i)X_i' & p + q > c_i
    \end{cases}
    \label{rxn:affine-c-offset}
\end{align}
Then, for each $i \in \{1, \ldots, k\}$, add the reaction:
\begin{align}
    X_i' \rxn \begin{cases}
        n_iD_1^P & n_i \geq 0 \\
        (-n_i)D_1^C & n_i < 0 \\
    \end{cases}
    \label{rxn:affine-n}
\end{align}
Then, for each $p, q \in \{1, \ldots, d - 1\}$, add the reactions:
\begin{align}
    D_{p}^P + D_{q}^P & \rxn \begin{cases}
        D_{p + q}^P & p + q \leq d - 1 \\
        D_{p + q - d}^P + Y^P & p + q > d - 1
    \end{cases}
    \label{rxn:affine-dp} \\
    D_{p}^C + D_{q}^C & \rxn \begin{cases}
        D_{p + q}^C & p + q \leq d - 1 \\
        D_{p + q - d}^C + Y^C & p + q > d - 1
    \end{cases}
    \label{rxn:affine-dc}
\end{align}
Then, add the reaction:
\begin{align}
    B + B \rxn B + bY^C
    \label{rxn:affine-b}
\end{align}
\begin{proof}
By inspecting the reactions, we see that the CRN has the linear invariant:
\begin{align*}
    I_{\hat{f}}(\vec{c}) & = \sum_{i=1}^k n_i\left(\vec{c}(X_i) + \sum_{p=1}^{c_i} p\vec{c}(C_{i,p}) + \sum_{i=1}^k \vec{c}(X_i')\right) \\
    & + \sum_{p=1}^{d-1} p\vec{c}(D^P_p) - \sum_{q=1}^{d-1} q\vec{c}(D^C_q) + d(\vec{c}(Y^P) - \vec{c}(Y^C)) - bd\vec{c}(B)
\end{align*}
in addition to the modular invariants:
\begin{align*}
    I_{P}(\vec{c}) & = \left(\sum_{n_i\geq 0} n_i\left(\vec{c}(X_i) + \sum_{p=1}^{c_i} p\vec{c}(C_{i,p}) + \sum_{i=1}^k \vec{c}(X_i')\right) + \sum_{p=1}^{d-1} p\vec{c}(D^P_p)\right) \bmod d \\
    I_{C}(\vec{c}) & = \left(\sum_{n_i< 0} n_i\left(\vec{c}(X_i) + \sum_{p=1}^{c_i} p\vec{c}(C_{i,p}) + \sum_{i=1}^k \vec{c}(X_i')\right) - \sum_{q=1}^{d-1} q\vec{c}(D^C_q)\right) \bmod d
\end{align*}
For an initial configuration $\vec{i}$ where $\vec{i}(X_i) = x_i$, the invariants have values:
\begin{align*}
    I_{\hat{f}}(\vec{i}) & = \sum_{i=1}^k n_i\vec{c}(X_i) = \sum_{i=1}^k n_ix_i \\
    I_{P}(\vec{i}) & = \left(\sum_{n_i\geq 0} n_i\vec{i}(X_i)\right) \bmod d = \left(\sum_{n_i\geq 0} n_ix_i\right) \bmod d \\
    I_{C}(\vec{i}) & = \left(\sum_{n_i< 0} n_i\vec{i}(X_i)\right) \bmod d = \left(\sum_{n_i< 0} n_ix_i\right) \bmod d
\end{align*}

We now prove that this construction reverse-robustly decides $\hat{f}$. Let $\vec{i}$ be an arbitrary initial configuration, where $\vec{i}(X_i) = x_i$ and $x_i - c_i \geq 0$, and let $\vec{c}$ such that $\vec{i} \bireach \vec{c}$. If any $X_i$ are present in $\vec{c}$, we first apply reactions (\ref{rxn:affine-input}) repeatedly. We then apply reaction (\ref{rxn:affine-c-offset}) repeatedly until only one $C_{i,s_i^*}$ is present for each $i$ and no more of that reaction can occur. We then do the same for reaction (\ref{rxn:affine-n}) until no more $X_i'$ are present, and then reactions (\ref{rxn:affine-dp}) and (\ref{rxn:affine-dc}) until only one $D^P_{p^*}$ and $D^C_{q^*}$ are present. Finally, we do the same for reaction (\ref{rxn:affine-b}) until only one $B$ is present; this resulting configuration $\vec{o}$ must then be stable. To see that $\vec{o}$ also has the correct output, we first consider the following quantity:
\begin{align*}
    I_{c,i}(\vec{c}) = \vec{c}(X_i) + \sum_{p=1}^{c_i} p\vec{c}(C_{i,p})
\end{align*}
In the initial configuration, the quantity has value $I_{c,i}(\vec{i}) = \vec{i}(X_i) = x_i \geq c_i$. By inspecting reaction (\ref{rxn:affine-input}), as well as reactions (\ref{rxn:affine-n}) through (\ref{rxn:affine-b}), we see that those reactions preserve the quantity. For reaction (\ref{rxn:affine-c-offset}), however, while the quantity is preserved whenever $p + q \leq c_i$, this is not true whenever $p + q > c_i$, as $\sum_{p=1}^{c_i} p\vec{c}(C_{i,p})$ changes. Specifically, it decreases when the reaction is applied in forward and increases when applied in reverse. Even so, since $C_{i,c_i}$ is one of the products of the forward reaction, we see that the quantity remains at least $c_i$ for any $\vec{c}$ such that $\vec{i} \bireach \vec{c}$. Since only one $C_{i,s_i^*}$ is present for each $i$ and no $X_i$ are present, the quantity has the value $I_{c,i}(\vec{o}) = s_i^*\vec{o}(C_{i,s_i^*}) = s_i^*$, and it must be then be the case that $s_i^* = c_i$.

We next consider the modular variants $I_P$ and $I_C$. In $\vec{o}$, $I_P$ has the value:
\begin{align*}
    I_{P}(\vec{o}) & = \left(\sum_{n_i\geq 0}^k n_ic_i\vec{c}(C_{i,c_i}) + p^*\vec{c}(D^P_{p^*})\right) \bmod d = \left(\sum_{n_i\geq 0}^k n_ic_i + p^*\right) \bmod d
\end{align*}
For this to be true, $p^* = \sum_{n_i \geq 0} n_i(x_i - c_i) \bmod d$.  By a similar argument, $q^* = \sum_{n_i < 0} (-n_i)(x_i - c_i) \bmod d$. Note that for $f(\vec{x})$ to be an integer, $\sum_{i=1}^k n_i(\vec{x}(i) - c_i) \bmod d = (p^* - q^*) \bmod d = 0$. Since $0 \leq p^*, q^* < d$, it must be the case that $p^* - q^* = 0$. We then see that $I_{\hat{f}}$ has value:
\begin{align*}
    I_{\hat{f}}(\vec{o}) & = \sum_{i=1}^k n_i c_i\vec{o}(C_{i,c_i}) + p^*\vec{o}(D^P_{p^*}) - q^*\vec{o}(D^C_{q^*}) + d(\vec{o}(Y^P) - \vec{o}(Y^C)) - bd\vec{o}(B) \\
    & = \sum_{i=1}^k n_i c_i + p^* - q^* + d(\vec{o}(Y^P) - \vec{o}(Y^C)) - bd \\
    & = \sum_{i=1}^k n_i c_i + d(\vec{o}(Y^P) - \vec{o}(Y^C)) - bd
\end{align*}
For this to be true, it must be the case that:
\begin{align*}
    \vec{o}(Y^P) - \vec{o}(Y^C) & = b + \frac{1}{d}\sum_{i=1}^k n_i(x_i - c_i) = f(\vec{x})
\end{align*}
Since $\vec{o}(Y^P) - \vec{o}(Y^C) = f(\vec{x})$, the CRC reverse-robustly computes $\hat{f}$.
\end{proof}

%% file: results/semilinear_computable.tex
\subsection{All semilinear functions can be reverse-robustly computed}
We require the following result from \cite{DBLP:conf/wdag/DotyH24}, which guarantees that any semilinear function can be decomposed into affine partial functions with disjoint domains.

\begin{lemma}\label{lem:decompose_semilinear}
Let $f : \mathbb{N}^k \to \mathbb{N}$ be a semilinear function. Then there is a finite set $\{f_1 : \mathbb{N}^k \to \mathbb{N}, \ldots, f_m : \mathbb{N}^d \to \mathbb{N}\}$ of affine partial functions, where each $\mathrm{dom} f_i$ is a linear set, and $\mathrm{dom} f_i \cap \mathrm{dom} f_j = \varnothing$ for all $i \neq j$, such that, for each $x \in \mathbb{N}^k$, if $\vec{x} \in \mathrm{dom} f_i$, then $f(\vec{x}) = f_i(\vec{x})$, and $\bigcup_{i=1}^m \mathrm{dom} f_i = \mathbb{N}^k$.
\end{lemma}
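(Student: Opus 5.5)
This lemma is cited from \cite{DBLP:conf/wdag/DotyH24}. I would not reprove it inside the CRN framework. Instead I would derive it from the standard structural characterisation of semilinear functions, and the argument never uses reverse-robustness.

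\textbf{Step 1: a first piecewise-affine decomposition.} Since $f$ is semilinear, its graph $G=\{(\vec{x},f(\vec{x}))\}\subseteq\mathbb{N}^{k+1}$ is a semilinear set. By the Eilenberg--Sch\"utzenberger theorem (Ginsburg--Spanier in this form), $G$ is a finite union of \emph{disjoint} linear sets of the form $L_j=\{\vec{b}_j+\sum_\ell n_\ell \vec{p}_{j,\ell}\}$ with linearly independent periods. I would project each $L_j$ onto the first $k$ coordinates. Because $f$ is a function, the last coordinate restricted to $L_j$ is determined by the first $k$ coordinates. Linear independence of the periods then makes the coefficients $n_\ell$ rational affine functions of $\vec{x}$ on the projection. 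Hence $f$ agrees on $\pi(L_j)$ with an affine function with rational coefficients.

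\textbf{Step 2: disjointness and linearity of the domains.} The projections $D_j=\pi(L_j)$ are linear sets covering $\mathbb{N}^k$, but they may overlap. I would disjointify by taking $D_1$, $D_2\setminus D_1$, $D_3\setminus(D_1\cup D_2)$, and so on. Semilinear sets are closed under Boolean operations, so each difference is semilinear. Applying the disjoint-union decomposition again splits each difference into disjoint linear sets, and on each piece $f$ inherits the affine formula of the corresponding $D_j$.

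\textbf{Step 3: matching the paper's affine form.} It remains to rewrite each affine formula as $b+\sum_i a_i(\vec{x}(i)-c_i)$ with $b,c_i\in\mathbb{N}$ and rational $a_i$. This is where I expect the main obstacle. A general rational affine map $\alpha+\sum a_i x_i$ must be re-expressed with a \emph{nonnegative} integer constant $b$ and offsets $c_i$ such that $\vec{x}\geqq\vec{c}$ on the domain. I would take $\vec{c}$ to be the base point of the linear domain; every point of a linear set dominates its base point, so $\vec{x}\geqq\vec{c}$ holds throughout. Then set $b=f(\vec{c})\in\mathbb{N}$, which automatically makes the constant a valid output value. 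The coefficients $a_i$ are those obtained in Step 1 and are unchanged. Finally I would check that $f_i$ and $f$ agree on $\mathrm{dom} f_i$, and that the union of the pieces is all of $\mathbb{N}^k$, which holds by construction.
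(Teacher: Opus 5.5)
The paper gives no proof of this lemma; it imports it from Doty--Heckmann, so your argument is a self-contained substitute for a citation. It is correct and follows the standard route: decompose the graph, project, read off an affine formula. Steps~2 and~3, however, are unnecessary.

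Since $f$ is total, $\pi$ restricted to $G$ is a bijection onto $\mathbb{N}^k$. So $\vec{x}\in\pi(L_i)\cap\pi(L_j)$ forces $(\vec{x},f(\vec{x}))\in L_i\cap L_j$, and hence $i=j$. Once Step~1 uses a \emph{disjoint} decomposition, your remark that the projections ``may overlap'' is false: the $\pi(L_j)$ are already pairwise disjoint linear sets covering $\mathbb{N}^k$.

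Step~3 is likewise not an obstacle. Step~1 already gives $f(\vec{x})=\vec{b}_j(k+1)+\vec{a}_j\cdot(\vec{x}-\pi(\vec{b}_j))$ on $\pi(L_j)$. Here $\vec{b}_j(k+1)=f(\pi(\vec{b}_j))\in\mathbb{N}$, and every $\vec{x}\in\pi(L_j)$ satisfies $\vec{x}\geqq\pi(\vec{b}_j)$. This is exactly the paper's form, and it stays valid on any subset of $\pi(L_j)$, so even after your disjointification no rebasing is needed.

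The one step you should make explicit is in Step~1. Independence of the periods $\vec{p}_{j,\ell}\in\mathbb{N}^{k+1}$ does not by itself give independence of the projected periods $\pi(\vec{p}_{j,\ell})$, and that is what lets you write the $n_\ell$ as rational affine functions of $\vec{x}$. It follows from functionality:
take an integer relation $\sum_\ell\lambda_\ell\pi(\vec{p}_{j,\ell})=\vec{0}$ and write $\lambda_\ell=\lambda^+_\ell-\lambda^-_\ell$ with $\lambda^\pm_\ell\in\mathbb{N}$. The points $\vec{b}_j+\sum_\ell\lambda^+_\ell\vec{p}_{j,\ell}$ and $\vec{b}_j+\sum_\ell\lambda^-_\ell\vec{p}_{j,\ell}$ both lie in $L_j\subseteq G$ and have equal projections, so they coincide. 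Hence $\sum_\ell\lambda_\ell\vec{p}_{j,\ell}=\vec{0}$, and independence of the $\vec{p}_{j,\ell}$ gives all $\lambda_\ell=0$.

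The same trick shows that independence is not needed at all. The last coordinates $\vec{p}_{j,\ell}(k+1)$ satisfy every rational relation satisfied by the $\pi(\vec{p}_{j,\ell})$. So $\pi(\vec{p}_{j,\ell})\mapsto\vec{p}_{j,\ell}(k+1)$ extends to a linear functional $\vec{a}_j\in\mathbb{Q}^k$, and a decomposition of $G$ into disjoint linear sets with arbitrary periods already suffices.
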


The next theorem shows that semilinear functions can be reverse-robustly computed.
\begin{theorem}
    Let $f : \mathbb{N}^k \to \mathbb{N}$ be an semilinear function. Then there is a CRC that reverse-robustly computes $f$.
\end{theorem}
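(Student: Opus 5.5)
We follow the construction of \cite{DBLP:conf/wdag/DotyH24} and check each piece for reverse-robustness. First, apply Lemma \ref{lem:decompose_semilinear} to write $f$ as affine partial functions $f_1,\ldots,f_m$ whose domains are disjoint linear sets covering $\mathbb{N}^k$. Each $\mathrm{dom} f_j$ is semilinear, so by Theorem \ref{thm:semilinear-rr-decidable} there is a CRD $\mathcal{D}_j$ that reverse-robustly decides it. By Lemma \ref{lem:affine-diff-representation} there is a CRC $\mathcal{C}_j$ that monotonically and reverse-robustly computes a diff-representation $\hat{f}_j = (y_{P,j}, y_{C,j})$ of $f_j$; its outputs are only produced, never consumed, by its own reactions. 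We run all $2m$ sub-networks in parallel on copies of the input, using reactions $X_i \rxn X_{i,1} + \cdots + X_{i,2m}$. The argument of Lemma \ref{lem:parallel-comp} extends directly to any finite number of components: the reverse of a splitting reaction yields only $X_i$, which is consumed only by the splitting reaction, so every reverse split can be shuffled forward and cancelled. Hence from any bireachable configuration, all the $\mathcal{D}_j$ and $\mathcal{C}_j$ can simultaneously reach stable, correct outputs.

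Next we combine the outputs. For each $j$, the output molecules $Y^P_j$ and $Y^C_j$ of $\mathcal{C}_j$ should contribute to the global output $Y$ only while $\mathcal{D}_j$ votes yes. We use the standard ``gated'' reactions. While $\mathcal{D}_j$ votes yes, convert $Y^P_j \rxn Y + \overline{Y}^P_j$ and $Y^C_j \rxn \overline{Y}^C_j$, and let $\overline{Y}^C_j + Y \rxn \varnothing$ cancel. While $\mathcal{D}_j$ votes no, undo these steps catalytically: $N_j + \overline{Y}^P_j \rxn N_j + Y^P_j + Y^C_{\text{debt}}$, where the debt species later annihilates a $Y$, and so on. The point is to write down \emph{linear invariants} linking $Y$ to these bookkeeping species. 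A natural choice is
\[
\vec{c}(Y) = \sum_j \bigl(\vec{c}(\overline{Y}^P_j) - \vec{c}(\overline{Y}^C_j)\bigr) - \vec{c}(\text{debt species}),
\]
together with, for each $j$, an invariant saying that $\vec{c}(Y^P_j) + \vec{c}(\overline{Y}^P_j)$ equals the total $P$-output produced by $\mathcal{C}_j$, and similarly for $C$. Because invariants are preserved by reverse reactions too, they hold at every $\vec{c}$ with $\vec{i} \bireach \vec{c}$.

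The correctness proof then mirrors the earlier ones. Take $\vec{c}$ bireachable from $\vec{i}$. Run forward the input splits, each $\mathcal{D}_j$ and $\mathcal{C}_j$ to stable correct outputs, and then the gated conversions and cancellations until none apply, giving $\vec{o}$. By disjointness exactly one $\mathcal{D}_{j^*}$ votes yes. All of its $Y^P_{j^*}$ and $Y^C_{j^*}$ have been converted, and every other $j$ has been fully reverted to its unconverted species. The invariant then evaluates to
\[
\vec{o}(Y) = y_{P,j^*} - y_{C,j^*} = f_{j^*}(\vec{x}) = f(\vec{x}).
\]
For stability, we must argue that no forward reaction applicable at $\vec{o}$ changes $\vec{c}(Y)$. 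Every gating reaction is either inapplicable or blocked by a missing voter, and $\mathcal{D}_j$'s votes are stable.

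The main obstacle is the interaction between reverse reactions and the catalytic gating. A reverse reaction can produce a stray $Y$ or a stray bookkeeping molecule while a voter is transiently present in the wrong state. We must check that the final cleanup phase always terminates with $\vec{o}(Y) \geq 0$ matching the invariant, and that no molecule of $Y$ can become stranded, i.e.\ left unpaired in a way no forward reaction can fix. The first thing to try is to design the gating so that every bookkeeping species has a forward reaction pushing it to a canonical state under each possible vote. Then the invariants alone pin down the final $Y$ count, exactly as $I_T$ and $I_{\hat{f}}$ did in the earlier lemmas, without having to reason about the history of the bi-execution.
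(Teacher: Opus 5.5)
There is a genuine gap. Your high-level architecture is the same as the paper's:

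\begin{itemize}
\item decompose $f$ by Lemma~\ref{lem:decompose_semilinear};
\item decide each $\mathrm{dom} f_j$ and compute each $\hat{f}_j$ on split copies of the input;
\item gate the $\hat{f}_j$-outputs by the votes of $\mathcal{D}_j$;
\item read off $\vec{o}(Y)$ from linear invariants.
\end{itemize}

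However, the gating network is the only genuinely new ingredient of this theorem. You leave it at ``and so on'' and flag it as ``the main obstacle'', and the reactions you do write down fail.

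First, the quantity you display is not an invariant of your reactions. The catalyzed $Y^C_j \rxn \overline{Y}^C_j$ lowers the right-hand side by $1$ without changing $\vec{c}(Y)$. The reaction $N_j + \overline{Y}^P_j \rxn N_j + Y^P_j + Y^C_{\mathrm{debt}}$ lowers it by $2$. The reaction $\overline{Y}^C_j + Y \rxn \varnothing$ lowers $\vec{c}(Y)$ while raising the right-hand side.

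More fundamentally, $\overline{Y}^C_j + Y \rxn \varnothing$ erases the only record of $j$'s $C$-contribution, so no no-vote reaction can ever undo it. A yes voter of $\mathcal{D}_j$ may be present for an input outside $\mathrm{dom} f_j$: transiently even in forward executions, and freely in bi-executions. Take such a wrong index $j$ whose $\mathcal{C}_j$ has produced, say, $5$ $P$-units and $3$ $C$-units. A transient yes vote leaves $2$ copies of $Y$ and $5$ of $\overline{Y}^P_j$. Once $\mathcal{D}_j$ settles on no, your undo creates $5$ debts. These eat those $2$ copies and then $3$ of the copies contributed by $j^*$, leaving $\max(f(\vec{x})-3,0)$. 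No forward cleanup can repair this, so your final invariant computation has nothing behind it.

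Two related points also need fixing. ``Total $P$-output produced by $\mathcal{C}_j$'' is not a species count, so the per-$j$ invariant must be written through $\mathcal{C}_j$'s internal species. And Lemma~\ref{lem:parallel-comp} alone cannot certify $\mathcal{C}_j$ once the gating consumes its outputs.

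The missing idea is to tie every copy of $Y$ to a revocable tag and to make cancellation net-zero. The paper uses the reactions
\begin{itemize}
\item $L^Y_j + \hat{Y}^P_j \rxn L^Y_j + Y^P_j + Y$,
\item $L^N_j + Y^P_j + Y \rxn L^N_j + \hat{Y}^P_j$,
\item $L^Y_j + \hat{Y}^C_j \rxn L^Y_j + Y^C_j$,
\item $L^N_j + Y^C_j \rxn L^N_j + \hat{Y}^C_j$,
\item $Y^P_j + Y^C_j + Y \rxn \varnothing$.
\end{itemize}
The last reaction removes a $P$-unit together with a $C$-unit, so a cancelled pair contributes nothing and never needs undoing.

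With these reactions, $I_0 = \sum_j \vec{c}(Y^P_j) - \vec{c}(Y)$ is a true linear invariant with value $0$. Adding $d_j(\vec{c}(\hat{Y}^P_j) + \vec{c}(Y^P_j) - \vec{c}(\hat{Y}^C_j) - \vec{c}(Y^C_j))$ to $\mathcal{C}_j$'s invariant $I_{\hat{f}_j}$ gives a per-$j$ invariant. It forces that difference to equal $f_j(\vec{x})$ once $\mathcal{C}_j$'s internal species are cleaned up.

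Both invariants survive every reverse reaction, including $\varnothing \rxn Y^P_j + Y^C_j + Y$. So at a terminal $\vec{o}$ with unique yes index $j^*$, all active species of other indices are gone and $\vec{o}(Y) = \vec{o}(Y^P_{j^*})$. The triple cancellation together with $f_{j^*}(\vec{x}) \ge 0$ then forces $\vec{o}(Y^C_{j^*}) = 0$. Hence $\vec{o}(Y) = f_{j^*}(\vec{x}) = f(\vec{x})$.
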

We employ the construction from \cite{DBLP:journals/nc/DotyH15}. Let the set of input species be $\Sigma = \{X_1, \ldots, X_k\}$ and the output species be $Y$. By Lemma \ref{lem:decompose_semilinear}, $f$ can be decomposed into partial affine functions $f_1, \ldots, f_m$ with disjoint linear domains, and by Theorem \ref{thm:semilinear-rr-decidable} and Lemma \ref{lem:affine-diff-representation}, each of the partial affine functions and their domains can be computed reverse-robustly. By duplicating $X_{i}$, we can reverse-robustly decide the predicate $\vec{x} \in \mathrm{dom} f_j$ and compute $\hat{f}_j(\vec{x})$ in parallel for each $f_j$. Let $V_j^Y$ and $V_j^N$ be the yes and no voters, respectively, for the CRD $\mathcal{D}_j$ reverse-robustly deciding if $\vec{x} \in \mathrm{dom} f_j$ and let $\hat{Y}_j^P$ and $Y_j^C$ be the output species for the CRC $\mathcal{C}_j$ reverse-robustly computing $\hat{f}_j$, using the constructions from the previous sections. We interpret each $\hat{Y}^P_j$ and $\hat{Y}^C_j$ as ``inactive'' versions of the ``active'' output species $Y^P_j$ and $Y^C_j$. Now, we convert the function result of the applicable partial affine function to the global output by adding the following reactions for each $j \in \{1, \ldots m\}$ and for each $L_j^Y$ that votes yes and $L_j^N$ that votes no in $\mathcal{D}_j$:
\begin{align}
    L_j^Y + \hat{Y}^P_j & \rxn L_j^Y + Y^P_j + Y 
    \label{rxn:semilinear-global-Y-P} \\
    L_j^N + Y^P_j + Y & \rxn L_j^N + \hat{Y}^P_j
    \label{rxn:semilinear-global-N-P}
\end{align}
Reaction (\ref{rxn:semilinear-global-Y-P}) produces an output copy of species $Y$ and (\ref{rxn:semilinear-global-N-P}) reverses the first reaction. Both are catalyzed by the vote of the $i$-th predicate result. We also add the reactions:
\begin{align}
    L_j^Y + \hat{Y}^C_j & \rxn L_j^Y + Y^C_j
    \label{rxn:semilinear-global-Y-C} \\
    L_j^N + Y^C_j & \rxn L_j^Y + \hat{Y}^C_j
    \label{rxn:semilinear-global-N-C}
\end{align}
and
\begin{align}
    Y^P_j + Y^C_j + Y \rxn \varnothing
    \label{rxn:semilinear-global-cancel}
\end{align}
\begin{proof}
We now prove that this construction reverse-robustly decides $f$. Let $\vec{i}$ be an arbitrary initial configuration, where $\vec{i}(X_i) = x_i$, and let $\vec{c}$ such that $\vec{i} \bireach \vec{c}$. We first run each $\mathcal{D}_j$ and $\mathcal{C}_j$ in parallel until no more of the inactive output species can be produced. We then run reactions (\ref{rxn:semilinear-global-Y-P}) through (\ref{rxn:semilinear-global-cancel}) until no more reactions can occur; this resulting configuration $\vec{o}$ must be stable.

To see that $\vec{o}$ also has the correct output, since each $\mathcal{D}_j$ reverse-robustly decides $\vec{x} \in \mathrm{dom} f_j$ and the counts of their voters are not changed by reactions (\ref{rxn:semilinear-global-Y-P}) through (\ref{rxn:semilinear-global-cancel}), by Lemma \ref{lem:parallel-comp}, their individual votes are stable and correct in $\vec{o}$. Although each $\mathcal{C}_j$ reverse-robustly computes $\hat{f}_j$, since their output is consumed, we cannot apply Lemma \ref{lem:parallel-comp}. However, let $I_{\hat{f}_j}$ be the linear invariant derived for each $\mathcal{C}_j$. We see that for each $j \in \{1, \ldots, m\}$, appropriately renaming species, constant, and indices, the CRN has the linear invariant:
\begin{align*}
    I_j(\vec{c}) & = \sum_{i=1}^k n_{i,j}\vec{c}(X_i) + I_{\hat{f}_j}(\vec{c}) + d_j(\vec{c}(Y^P_j) - \vec{c}(Y^C_j)) \\
    & = \sum_{i=1}^k n_{i,j}\left(\vec{c}(X_i) + \vec{c}(X_{i,j}) + \sum_{p=1}^{c_{i,j}} p\vec{c}(C_{i,p,j}) + \sum_{i=1}^k \vec{c}(X_{i,j}')\right) \\
    & + \sum_{p=1}^{d_j-1} p\vec{c}(D^P_{p,j}) - \sum_{q=1}^{d_j-1} q\vec{c}(D^C_{q,j}) + d_j(\vec{c}(\hat{Y}^P_j) + \vec{c}(Y^P_j) - \vec{c}(\hat{Y}^C_j) - \vec{c}(Y^C_j)) - b_jd_j\vec{c}(B_j)
\end{align*}
By a similar argument to the partial affine case, it follows that:
\begin{align*}
    \vec{o}(\hat{Y}^P_j) + \vec{o}(Y^P_j) - \vec{o}(\hat{Y}^C_j) - \vec{o}(Y^C_j) & = b_j + \frac{1}{d_j}\sum_{i=1}^k n_{i,j}(x_i - c_{i,j}) = f_j(\vec{x})
\end{align*}
We further see by inspecting the reactions that the CRN has the additional linear invariant:
\begin{align*}
    I_0(\vec{c}) & = \sum_{j=1}^m \vec{c}(Y^P_j) - \vec{c}(Y)
\end{align*}
For an initial configuration $\vec{i}$ where $\vec{i}(X_i) = x_i$, the invariant has value:
\begin{align*}
    I_0(\vec{i}) = 0
\end{align*}
We argue that in $\vec{o}$, there is exactly one $j^*$ such that $L_{j^*}^Y$ is present, and that $\vec{x} \in \mathrm{dom} f_{j^*}$; this follows from each $\mathcal{D}_j$ reverse-robustly deciding their predicate, and their voters deactivating the incorrect output species. It must then be the case that there is exactly one $j^*$ (the same $j^*$) such that $Y_{j^*}^P$ and $Y_{j^*}^C$ can be present and further, that $\hat{Y}_{j^*}^P$ and $\hat{Y}_{j^*}^C$ are absent, as otherwise they could be activated by the voters of $\mathcal{D}_{j^*}$. It then follows that:
\begin{align*}
    \vec{o}(Y^P_{j^*}) - \vec{o}(Y^C_{j^*}) & = f_{j^*}(\vec{x})
\end{align*}
Additionally, the invariant $I_0$ has the value:
\begin{align*}
    I_0(\vec{o}) & = \vec{o}(Y^P_{j^*}) - \vec{o}(Y)
\end{align*}
For this to be true, $\vec{o}(Y^P_{j^*}) = \vec{o}(Y)$. Note that it is impossible for $Y^P_{j^*}$, $Y^C_{j^*}$, and $Y$ to be simultaneously present, as otherwise reaction (\ref{rxn:semilinear-global-cancel}) could occur. If $\vec{o}(Y^P_{j^*}) = \vec{o}(Y) > 0$, $\vec{o}(Y^C_{j^*}) = 0$. On the other hand, if $\vec{o}(Y^P_{j^*}) = \vec{o}(Y) = 0$, since $f_{j^*}(\vec{x}) \geq 0$, $\vec{o}(Y^C_{j^*}) = 0$. In either case, $\vec{o}(Y^C_{j^*}) = 0$. Then:
\begin{align*}
    \vec{o}(Y) = \vec{o}(Y^P_{j^*}) & = f_{j^*}(\vec{x})
\end{align*}
Since $\vec{x} \in \mathrm{dom} f_{j^*}$, $\vec{o}(Y) = f(\vec{x})$, and the CRC reverse-robustly computes $f$.
\end{proof}

%% file: conclusion.tex
\section{Conclusion}
We explored the capabilities of chemical reaction networks (CRNs) under the reverse-robust computation model, where reactions are allowed to occur in reverse while maintaining the existence of a reachable correct stable configuration.
This model aligns with thermodynamic theory assuming all species have a finite free energy, implying all chemical reactions are reversible, albeit at some possibly very slow rate.
We showed that CRNs have the same computational power under the reverse-robust computation model as they do under the stable computation model. Both semilinear predicates can be decided and semilinear functions can be computed reverse-robustly without an initial leader.
% We also introduced a new tool, invariants, for proving correctness under the reverse-robust computation model: linear (or linear modulo some $m$) combinations of the counts of the input species that are preserved by all reactions.

Some key questions remain open: first, are there a set of conditions that are necessary and/or sufficient to show that a CRN is reverse-robust? This paper was able to use existing constructions designed for the stable computation model to prove results for the reverse-robust computation model. In fact, we were unable to design a CRN that works under the former model but not the latter without the presence of a ``trap'' reaction that cannot occur when only forward reactions are allowed. We thus conjecture that the absence of such reactions is sufficient to prove reverse-robustness (and so, that a stably computing CRN can be made reverse-robust by removing such reactions without affecting its correctness). Second, our model only allows reverse-reactions to occur transiently, as we require that only forward reactions occur in the final path to reach a stable state. Is it possible to formally define some reasonable model of computation in which reverse-reactions are always allowed to occur? Note that such a model cannot require stability, as it is always possible to reverse the path of reactions that led to a state. Instead, we may require ``favorability''; although temporary reversal to incorrect states can occur, the correct answer is present ``most of the time''.

% \todoi{DD: more fuzzy open question: our model ``assumes reverse reactions eventually stop happening'' (due to requiring only forward reactions in the final path to the stable state); is there some reasonable model of CRN computation in which reverse reactions can always occur? Note that we could not hope to require stability in this case, so a new definition of computation would be required, for example, asking that the correct answer eventually be present ``most of the time'' while allowing temporarily reversal to incorrect states.}

%% file: ref.bib
@article{DBLP:journals/nc/DotyH15,
  author       = {David Doty and
                  Monir Hajiaghayi},
  title        = {Leaderless deterministic chemical reaction networks},
  journal      = {Nat. Comput.},
  volume       = {14},
  number       = {2},
  pages        = {213--223},
  year         = {2015},
  url          = {https://doi.org/10.1007/s11047-014-9435-8},
  doi          = {10.1007/S11047-014-9435-8},
  bibsource    = {dblp computer science bibliography, https://dblp.org}
}

@article{DBLP:journals/nc/ChenDS14,
  author       = {Ho{-}Lin Chen and
                  David Doty and
                  David Soloveichik},
  title        = {Deterministic function computation with chemical reaction networks},
  journal      = {Nat. Comput.},
  volume       = {13},
  number       = {4},
  pages        = {517--534},
  year         = {2014},
  url          = {https://doi.org/10.1007/s11047-013-9393-6},
  doi          = {10.1007/S11047-013-9393-6},
  bibsource    = {dblp computer science bibliography, https://dblp.org}
}

@article{DBLP:journals/dc/AngluinADFP06,
  author       = {Dana Angluin and
                  James Aspnes and
                  Zo{\"{e}} Diamadi and
                  Michael J. Fischer and
                  Ren{\'{e}} Peralta},
  title        = {Computation in networks of passively mobile finite-state sensors},
  journal      = {Distributed Comput.},
  volume       = {18},
  number       = {4},
  pages        = {235--253},
  year         = {2006},
  url          = {https://doi.org/10.1007/s00446-005-0138-3},
  doi          = {10.1007/S00446-005-0138-3},
  bibsource    = {dblp computer science bibliography, https://dblp.org}
}

@inproceedings{DBLP:conf/wdag/DotyH24,
  author       = {David Doty and
                  Ben Heckmann},
  editor       = {Dan Alistarh},
  title        = {The Computational Power of Discrete Chemical Reaction Networks with
                  Bounded Executions},
  booktitle    = {38th International Symposium on Distributed Computing, {DISC} 2024,
                  Madrid, Spain, October 28 - November 1, 2024},
  series       = {LIPIcs},
  pages        = {20:1--20:15},
  publisher    = {Schloss Dagstuhl - Leibniz-Zentrum f{\"{u}}r Informatik},
  year         = {2024},
  url          = {https://doi.org/10.4230/LIPIcs.DISC.2024.20},
  doi          = {10.4230/LIPICS.DISC.2024.20},
  bibsource    = {dblp computer science bibliography, https://dblp.org}
}
